\title{The low energy dynamics of charge two dyonic instantons}
\author{James P Allen\footnote{Email: j.p.allen@durham.ac.uk}\; and Douglas J Smith\footnote{Email: douglas.smith@durham.ac.uk}}
\newtheorem{claim}{Claim}[section]
\newcommand{\id}{\mathbb{1}}
\newcommand{\R}{\mathbb{R}}
\newcommand{\C}{\mathbb{C}}
\newcommand{\SU}{\mathrm{SU}}
\renewcommand{\O}{\mathrm{O}}
\newcommand{\U}{\mathrm{U}}
\DeclareMathOperator{\Tr}{Tr}
\renewcommand{\Re}{\mathop{\mathrm{Re}}}
\renewcommand{\Im}{\mathop{\mathrm{Im}}}
\newcommand{\T}{\intercal}
\DeclareMathOperator{\spn}{span}
\numberwithin{equation}{section}
\begin{document}
\maketitle
\thispagestyle{fancy}
\rhead{\large \bf DCPT-12/39}
\begin{center}
\emph{Department of Mathematical Sciences, Durham University,\\
Lower Mountjoy, Stockton Road, Durham DH1 3LE, UK}
\end{center}
\vspace{1.0cm}

\begin{abstract}
We explore the low energy dynamics of charge two instantons and dyonic instantons in $\SU(2)$ $5$-dimensional Yang-Mills. We make use of the moduli space approximation and first calculate the moduli space metric for two instantons. For dyonic instantons the effective action of the moduli space approximation also includes a potential term which we calculate. Using the ADHM construction we are able to understand some aspects of the topology and structure of the moduli space. We find that instantons undergo right angled scattering after a head on collision and we are able to give an analytic description of this in terms of a quotient of the moduli space by symmetries of the ADHM data. We also explore the scattering of instantons and dyonic instantons numerically in a constrained region of the moduli space. Finally we exhibit some examples of closed geodesics on the moduli space, and geodesics which hit the moduli space singularities in finite time.
\end{abstract}

\section{Introduction}

Instantons were first studied as topological solitons in Yang-Mills theories over $45$ years ago \cite{Belavin197585}. They have since played an important role in our understanding of non-pertubative phenomena in supersymmetric Yang-Mills. Detailed reviews can be found in \cite{Dorey:2002ik, Shifman:1999mv, Tong:2005un}. However, much less is known about the dynamics of instantons compared to other soliton systems such as their lower dimensional cousins, monopoles.

Instantons were first studied in $4$-dimensional Euclidean Yang-Mills where there are no questions to be asked about dynamics. However, instantons can also be embedded in higher dimensional Yang-Mills theories where there is an additional time component. In particular, instantons are interesting objects on D$4$-branes where the low energy world volume theory is $5$ dimensional super-Yang-Mills. Here instantons are $1/2$ BPS states corresponding to D$0$-branes dissolved in the D$4$-branes \cite{Douglas:1995bn, Witten:1995gx}. The $4$ dimensional Euclidean instanton solutions lift to the $4+1$ dimensional world volume as static solutions. Instantons on D$4$-branes are also significant when viewed in the compactification limit of M$5$-branes. In this case instantons carry KK momentum along the circle \cite{Aharony:1997th, Aharony:1997an}. In addition, it has recently been conjectured that the low energy theory of D$4$-branes is dual to the M$5$-brane theory on $\R^{4,1} \times S^1$ and is UV complete when including instantons and monopoles \cite{Lambert:2010iw, Douglas:2010iu}. If this is correct then non-perturbative states, including instantons, must carry information about the higher dimensional theory of M$5$-branes.

Understanding the dynamics of instantons in the full field theory is generically too complicated. Instead, one can use the powerful moduli space approximation of Manton \cite{Manton1982} to study the motion of instantons as geodesics on the instanton moduli space. The space of instanton solutions of a given topological charge, $k$, make up the moduli space and this can be parameterised by a finite number of parameters. With an $\SU(N)$ gauge group the moduli space of instantons has dimension of $4kN$. The moduli space contains the minimum energy solutions in Yang-Mills and energy conservation requires that after giving these static minimum energy solutions a small velocity, they must remain close to the moduli space despite now having an energy slightly above the minimum. This motion can then be approximated by geodesic motion actually on the moduli space, where at each snapshot in time, the fields look like a static instanton configuration.

The world volume theory of D$4$-branes also contains six scalar fields corresponding to the branes' transverse separation. Separating the branes along one direction gives a single scalar field a non-zero expectation value. The minimum energy solutions now have a non-zero scalar field but the instanton BPS equations for the spatial components of the gauge field remain unchanged. The minimum energy solutions are now $1/4$ BPS and the $A_0$ component is proportional to the Higgs field \cite{Lambert:1999ua}. Like with monopole dyons, this can be viewed as a constant rotation in the gauge group and gives the instantons an electric charge. The electric charge exactly balances the attractive force that the non-zero Higgs field creates. The presence of a non-zero Higgs field introduces a potential on the moduli space \cite{Lambert:1999ua}. This would cause the instantons to collapse if they were not stabilised by a rotation in gauge group, giving them electric charge and balancing the attractive scalar force. From the string theory point of view, these dyonic instantons are a bound state of fundamental strings and D$0$-branes between the D$4$-branes \cite{Lambert:1999ua, Zamaklar:2000tc} and can be interpreted as supertubes \cite{Kim:2003gj, Choi:2007mk}.

The moduli space of instantons has singularities corresponding to instantons of zero size.  These singularities correspond to the transition between the Coulomb branch where the D$0$-branes are dissolved in the D$4$-branes, and the Higgs branch where they are separated. However, if we restrict ourselves solely to the world volume theory then these singularities can be regulated by turning on non-commutativity \cite{Nekrasov:1998ss}. This places a bound on the minimum size of the instanton. However, if the conjecture that $5$D super-Yangs-Mills is UV complete is true then this should be unnecessary, at least in the full QFT.

The motion of dyonic instantons has been previously studied for a single instanton and two well separated instantons \cite{Peeters:2001np}. In the presence of a potential a single slow moving dyonic instanton will sit in a stable orbit in the gauge group at a stable size. There is a conserved angular momentum which prevents this orbit from decaying and the instanton will always have a finite size \cite{Lambert:1999ua, Peeters:2001np}. This protects the dyonic instanton from evolving to the singularity. For multiple instantons only the total angular momentum is conserved and it is not clear \emph{a priori} whether the system may evolve from non-singular initial conditions to a singularity in finite time. 

The outline of the rest of this paper is as follows. In Section 2 we review dyonic instantons in $5$ dimensional Yang-Mills. We describe the ADHM construction for finding solutions to the self-dual field equation and for the scalar field in the background of an instanton. We review the construction of the moduli space metric as the inner product of zero-mode perturbations of the instantons and we show how to construct the effective action for the approximate motion of slow moving instantons.

In Section 3 we explicitly construct the metric and potential on the moduli space of charge two instantons. The metric and potential have been partially constructed before for well separated instantons \cite{Peeters:2001np} but to explore the close range behaviour it is necessary to extend this calculation to the full moduli space. Our method for calculating the metric follows that of Osborn \cite{Osborn:1981yf} and Peeters and Zamaklar \cite{Peeters:2001np}. The potential is constructed as for monopole dyons as the square of the Killing vector corresponding to global gauge rotations in the direction of the scalar expectation value \cite{Tong:1999mg, Bak:1999da, Bak:1999sv}. 

In Section 4 we briefly review the dynamics of both a single instanton and a dyonic instanton to help put the results in later sections in context.

In Sections 5 and 6 we use the metric and potential to explore the scattering of two instantons and dyonic instantons. We see that instantons in a head on collision undergo right angled scattering and are able to understand this analytically through symmetries in the ADHM data. To further explore the range of possible scattering behaviours we numerically solve the equations for geodesic motion on the moduli space for a range of initial conditions.

In Section 7 we investigate the geodesic completeness of the moduli space. Instantons are susceptible to slow roll perturbations in their size and typically either spread out indefinitely or hit the zero-size singularity after a small initial perturbation away from static. For a single dyonic instanton there is a non-zero conserved angular momentum which protects the dyonic instanton from shrinking to zero size or spreading out. However. for multiple dyonic instantons this angular momentum can be transferred between them and we see that it is again possible for them to shrink to zero size and hit the singularity on the moduli space.

Finally in Section 8 we look at the stability of a localised charge two instanton. Again, this configuration for instantons is unstable but the presence of a potential on the moduli space for dyonic instantons serves to stabilise the charge two dyonic instanton. We also find a closed geodesic on the moduli space corresponding to a rotating charge two dyonic instanton.

\section{Dyonic instantons}

We begin in Section \ref{sec:instantons in Yang-Mills} with a short review of $5$ dimensional Yang-Mills. This theory contains both instantons, which are $1/2$-BPS solutions of minimal energy in each topological sector, and dyonic instantons which are the corresponding $1/4$ BPS minimal energy solutions in the presence of a non-zero scalar expectation value. In Section \ref{sec:adhm construction} we review the ADHM construction which reduces the problem of finding instantons to an algebraic constraint. The ADHM construction also provides a natural way to parameterise the instanton solutions. In Section \ref{sec:moduli space metric} we introduce the moduli space and see how we can approximate slow moving instantons by geodesic motion on this moduli space. In Section \ref{sec:moduli space potential} we see how the non-zero scalar expectation value of dyonic instantons introduces a potential term into the effective system describing the low speed motion.

\subsection{Instantons in Yang-Mills} \label{sec:instantons in Yang-Mills}

A stack of D$4$-branes preserves half of the supersymmetries of the full space-time and is therefore described by an $\mathcal{N} = 2$ supersymmetric theory in $5$ dimensions. Open strings between different D-branes induce a $\U(N)$ world volume gauge symmetry with a vector potential, $A_\mu$, $\mu = 0,1, \ldots, 4$. The theory has five scalars, $X^I$, $I = 5, \ldots 9$, which describe the branes' positions in the transverse space. T-duality interchanges $A_\mu$ and $X^I$ so the scalars are also in the adjoint representation of the gauge group. We can factor out the center of mass from the theory so that the low energy dynamics of a stack of $N$ D$4$-branes is described by $5$ dimensional $\SU(N)$ super-Yang-Mills.

Instantons are $1/2$-BPS objects preserving half of the supersymmetry and dyonic instantons are $1/4$-BPS objects. However, in the rest of this paper the fermionic fields will have little effect so we will only consider the bosonic fields. The bosonic action is
\begin{equation}
S = - \int \dif{}^5 x \, \Tr \del{ \tfrac{1}{4} F_{\mu \nu} F^{\mu \nu} + \tfrac{1}{2} D_\mu X^I D^\mu X^I }.
\end{equation}
We have set the Yang-Mills coupling to one since it will not play a role in our calculations.
The fields are taken to be Hermetian so that the covariant derivative is given by
\begin{equation}
D_\mu X^I = \partial_\mu X^I - i [A_\mu, X^I],
\end{equation}
and the field strength is
\begin{equation}
F_{\mu \nu} = \partial_\mu A_\nu - \partial_\nu A_\mu - i [A_\mu, A_\nu].
\end{equation}
We will only consider a stack of two D$4$-branes so that our gauge group is $\SU(2)$. Working in higher gauge groups would make the explicit construction of instantons and the moduli space much harder.

We use the indices $\mu = 0, \ldots, 4$ to denote the world-volume directions of the D$4$-branes and the indices $i=1, \ldots, 4$ to denote just the spatial directions. The indices $I = 5, \ldots 9$ denote the directions transverse to the branes.

From now on we will assume that only one of the scalar fields is non-zero, $\phi \equiv X^5$. The vacuum expectation value (VEV) of $\phi$ corresponds to the separation of the D$4$-branes in the $X^5$ direction.

The energy of the system is
\begin{equation}
E = \int \dif{}^4x \, \Tr \del{ \tfrac{1}{2} F_{i0} F_{i0} + \tfrac{1}{4} F_{ij} F_{ij} + \tfrac{1}{2} D_0 \phi D_0 \phi + \tfrac{1}{2} D_i \phi D_i \phi},
\end{equation}
and the topological charge, $k$, and electric charge, $Q_E$, are respectively,
\begin{align}
k &= - \frac{1}{16 \pi^2} \int \dif{}^4x \, \varepsilon_{ijkl} \Tr \del{ F_{ij} F_{kl} }, \\
Q_E &= \int \dif{}^4x \, \Tr \del{ D_i \phi F_{i0} } = \int \dif{}^4x \, \Tr \del{ D_i \phi }^2.
\end{align}
By completing the square of the energy density in the usual Bogomolny bound way we see that the energy is bounded by
\begin{equation}
E \ge 2\pi^2 |k| + |Q_E|.
\end{equation}
For a given topological charge and electric charge this energy bound is saturated when
\begin{align}
F_{ij} &= \tfrac{1}{2}\varepsilon_{ijkl} F_{kl}, \\
F_{i0} &= D_i \phi, \\
D_0 \phi &= 0.
\end{align}
These are the BPS equations for dyonic instantons. The second and third equation are satisfied when the fields are static and $A_0 = \phi$. However, the equation of motion for the scalar field,
\begin{equation}
D^2 \phi = 0,
\end{equation}
still needs to be satisfied in the background of the self-dual gauge field. For each possible scalar VEV the equation of motion for $\phi$ has a unique solution in the background of a given gauge field. When the VEV is zero the solution for $\phi$ is everywhere zero and the remaining BPS equation is that of a pure instanton which has a self-dual field strength. With a non-zero VEV, only $A_0$ and $\phi$ are modified compared to the zero VEV solution. Therefore dyonic instantons of a given VEV are in a one-to-one correspondence with instantons and have identical self-dual spatial components of the gauge field, $A_i$.

In the next section we will see how to use the ADHM construction to solve the BPS equations and construct a self dual gauge field and scalar field, $\phi$, satisfying its equation of motion in the background of this gauge field. 

\subsection{The ADHM construction} \label{sec:adhm construction}

The ADHM construction provides a method for finding (dyonic) instanton solutions without having to solve the self-dual field equation and equation of motion for $\phi$ directly. The problem is recast from a set of differential field equations into a set of algebraic constraints that are much easier to solve in practice.

We will only consider the ADHM construction for an $\SU(2)$ gauge group so that all quantities can be expressed in terms of quaternions. We represent a quaternion, $p$, as $p = p^i e_i$ where
\begin{align}
e_a &= i \sigma_a, \; a = 1,2,3, \\
e_4 &= \id_2,
\end{align}
are $2 \times 2$ matrices satisfying the quaternion algebra $e_1^2 = e_2^2 = e_3^2 = e_1 e_2 e_3 = -1$. The imaginary quaternions are isomorphic to $\mathfrak{su}(2)$ as can seen from their representation as Pauli matrices above. The unit quaternions are isomorphic to $\SU(2)$ which can again easily be seen from the form of the matrices in this representation. The adjoint fields can be written as
\begin{equation}
A_\mu = i A_\mu^i e_i, \quad \phi = i \phi^i e_i, \quad i = 1,2,3,
\end{equation}
where the components $A_\mu^i$ and $\phi^i$ are real. In these conventions, the anti-symmetric tensor is given by $\varepsilon_{1234} = -1$.

The starting point in the construction of a charge $k$ instanton is a $(k + 1) \times k$ quaternion valued matrix, $\Delta(x)$, known as the ADHM data. The $x$ dependence is over the spatial coordinates of the D$4$-brane world volume. The matrix $\Delta$ must satisfy the ADHM constraint,
\begin{equation} \label{eq:ADHM constraint}
\Delta^\dagger \Delta = \id_2 \otimes f^{-1},
\end{equation}
where $f^{-1}$ is a real and invertible $k \times k$ matrix and $\id_2$ is the real quaternion in the $2 \times 2$ matrix representation.

Without loss of generality we can work with $\Delta$ in the form
\begin{equation}
\Delta(x) = a - bx, 
\end{equation}
where $a$ and $b$ are block matrices,
\begin{equation}
a = \begin{pmatrix}
\Lambda \\
\Omega
\end{pmatrix}, \quad
b = \begin{pmatrix}
0 \\
\id_k
\end{pmatrix}.
\end{equation}
The top entry, $\Lambda$, is a row vector of length $k$ and $\Omega$ is a $k \times k$ symmetric matrix. The ADHM constraint is then a constraint on $a$,
\begin{equation}
a^\dagger a = \id_2 \otimes \mu^{-1},
\end{equation}
where $\mu^{-1}$ is a real and invertible $k \times k$ matrix.

To find the gauge field we must first find a quaternionic column vector, $U(x)$, satisfying
\begin{equation} \label{eq:U definition}
\Delta^\dagger U = 0, \quad \text{and} \quad U^\dagger U = \id_2.
\end{equation}
The gauge field is then given by
\begin{equation}
A_i = i U^\dagger \partial_i U.
\end{equation}
The field strength is self-dual so long as the ADHM constraint in equation \eqref{eq:ADHM constraint} is satisfied. Note that $U$ is uniquely determined up to a transformation of the form
\begin{equation}
U \rightarrow U\Omega(x),
\end{equation}
where $\Omega(x) \in \U(N)$ is a gauge transformation of $A_i$. The ADHM construction therefore doesn't pick out any specific gauge and should be thought of as mapping $\Delta(x)$ to a set of physically equivalent gauge fields.

For a dyonic instanton the scalar field can be constructed from the ansatz \cite{Dorey:1996hu},
\begin{equation} \label{eq:phi ansatz}
\phi = i U^\dagger \begin{pmatrix}
q & 0 \\
0 & P
\end{pmatrix} U,
\end{equation}
where $q$ is a pure imaginary quaternion and $iq$ is the vacuum expectation value (VEV) of $\phi$ at infinity. The lower block, $P$, is a real $k \times k$ antisymmetric matrix. For $\phi$ to satisfy its equation of motion, $P$ must satisfy
\begin{equation}
2 \eta^a_{ij} q^a \Lambda^\T_i \Lambda_j - [\Omega^\T_i, [\Omega_i, P]] - \{P, \Lambda^\T_i \Lambda_i \} = 0.
\end{equation}
This is shown in Appendix \ref{ap:calculation of phi}.

The fields are invariant under transformations of the ADHM data of the form
\begin{equation} \label{eq:rotation of ADHM data}
\Delta \rightarrow Q \Delta R^{-1} = \begin{pmatrix}
p & 0 \\
0 & R
\end{pmatrix} \Delta R^{-1}, \quad \text{and} \quad
U \rightarrow QU,
\end{equation}
where $p$ is a unit quaternion and $R$ is an orthogonal $k \times k$ matrix. The scalar field is only invariant when the imaginary part of $p$ is parallel to $q$, and otherwise has its VEV rotated by $iq \rightarrow i \bar p q p$.

Na\"ively we count $4k$ real parameters in $\Lambda$ and $2k(k+1)$ real parameters in $\Omega$. These must satisfy the ADHM constraints which remove $\tfrac{3}{2} k(k-1)$ degrees of freedom. The transformation in equation \eqref{eq:rotation of ADHM data} has $\tfrac{1}{2} k(k-1)$ parameters in $R$ and $3$ in $p$. These are redundancies in the parameterisation of the ADHM data. Altogether there are therefore $8k - 3$ degrees of freedom in the ADHM data. We also include global gauge transformations in our counting of degrees of freedom which brings the total to $8k$. The ADHM construction is complete so that the space of solutions to the self-dual field equations with charge $k$ is $8k$ dimensional.

If there is a non zero scalar field then there is only one parameter in $p$ that leaves the expectation value invariant. However, there is only one unbroken global gauge symmetry and our counting still applies.

\subsection{The moduli space of instantons} \label{sec:moduli space metric}

The moduli space is the space of all instanton solutions of a given instanton number $k$. For any solution, all gauge transformations of that solution will also be a solution, but since these are all physically equivalent we choose to discount them from the moduli space. More precisely, the moduli space is therefore the space of all solutions to the self-dual field equations quotiented by local gauge transformations. Conventionally the moduli space still includes global gauge transformations for reasons that will become apparent later.

The moduli space has a naturally induced metric which can be derived in two equivalent ways. We can consider the geometry induced by varying from one point in the moduli space to another, or alternatively by considering the motion of slow moving instantons in the full field theory. The first is intrinsic to the moduli space while the second has a more physical interpretation.

Let us consider the intrinsic geometry on the moduli space first. We will begin by considering the unquotiented moduli space which includes local gauge transformations. Each point on this space corresponds to some instanton solution, $A_i$, and the tangent space at each point is made up of infinitesimal variations of the gauge field, $\delta A_i$, which remain in the moduli space. If $A_i \rightarrow A_i + \delta A_i$ is to remain in the moduli space it must satisfy the self-dual field equation and $\delta A_i$ must therefore satisfy the linear self-dual field equation,
\begin{equation} \label{eq:linear self-dual field equation}
D_i ( \delta A_j ) - D_j ( \delta A_i ) = \varepsilon_{ijkl} D_k ( \delta A_l ).
\end{equation}
Note that gauge transformations automatically satisfy this linear self-dual field equation.

There is a natural inner product on the unquotiented moduli space,
\begin{equation}
g(\delta A_i, \delta'A_i) = \int \dif{}^4x \Tr \del{\delta A_i \delta'A_i}.
\end{equation}
This will induce a metric on the quotiented moduli space. We are working with explicit representative fields in each gauge equivalence class, so the metric on the quotiented moduli space must be zero on tangent vectors which are purely gauge transformations so that it is well defined. Rather than modify the metric we can demand that all tangent vectors, $\delta A_i$, in the quotiented moduli space are orthogonal to gauge transformations,
\begin{equation}
g(\delta A_i, D_i \Lambda) = - \int \dif{}^4x \Tr \del{D_i (\delta A_i) \Lambda} = 0,
\end{equation}
for all $\Lambda$, or equivalently,
\begin{equation}
D_i \delta A_i = 0.
\end{equation}
Variations which satisfy the linear self-dual field equation and this gauge fixing conditions are known as \emph{zero modes} and form the tangent space at each point in the quotiented moduli space.

The parameters in the ADHM data provide a natural coordinate system on the moduli space. If we label the $8k$ parameters of the ADHM construction as $z^r$, $r = 1, \ldots, 8k$ then each choice corresponds to an instanton solution, $A_i(\mathbf{z}; \mathbf{x})$. In this coordinate system on the moduli space the canonical zero-modes are
\begin{equation} \label{eq:canonical zero-mode}
\delta_r A_i = \partial_r A_i - D_i \varepsilon_r.
\end{equation}
The first term is the canonical tangent vector in the unquotiented moduli space. The second term, $D_i \epsilon_r$, removes any gauge transformation component in $\partial_r A_i$. It is chosen so that $D_i(\delta_r A_i) = 0$.

To investigate the dynamics of instantons we need to ask what happens when we give the static instantons a small velocity. Of course these configurations will no longer strictly be instantons, but for small velocities they will remain close to the minimum energy solutions. Solving the full field theory equations for their motion would be extremely complicated. However, for small velocities the problem can be approximated by motion on the moduli space \cite{Manton1982}. Since the initial field configuration starts close to a minimum energy solution, by energy conservation the evolution of the fields must always stay close to a minimum energy solution and therefore close to solutions which lie in the moduli space. As reviewed below, this motion can be approximated by geodesic motion on the moduli space, with the metric as defined above. If the coordinates on the moduli space are labelled as $z^r$, then we allow a time dependence in $A_\mu(\mathbf{z}(t); \mathbf{x})$ only through $\mathbf{z}(t)$. This procedure is well understood for monopoles and a useful review is provided in reference \cite{Weinberg:2006rq}. We follow a similar argument for instantons.

Now that our fields have a time dependence they will not automatically satisfy the Yang-Mills equations of motion as the static fields did. For instantons without any excited scalar fields, Gauss's law becomes
\begin{equation}
D_i F_{i0} = D_i(D_i A_0 - \dot z^r \partial_r A_i) = 0.
\end{equation}
This can be solved by a perturbation of $A_0$ away from zero,
\begin{equation}
A_0 = \dot z^r \epsilon_r,
\end{equation}
where $\epsilon_r$ is chosen so that $D_i(D_i \epsilon_r - \partial_r A_i) = 0$. The electric components of the field strength can now be written as
\begin{equation}
F_{i0} = - \dot z^r \delta_r A_i,
\end{equation}
where
\begin{equation}
\delta_r A_i = \partial_r A_i - D_i \epsilon_r,
\end{equation}
are the zero-modes we defined above.

Substituting these slow moving instanton solutions into the Yang-Mills action gives an effective action for motion on the moduli space in terms of $z^r(t)$,
\begin{equation}
S = \tfrac{1}{2} \int \dif{}^5x \, \Tr \left (F_{i0} F_{i0} \right) = \tfrac{1}{2} \int \dif{}t \, g_{rs} \dot z^r \dot z^s,
\end{equation}
where
\begin{equation}
g_{rs} = \int \dif{}^4x \, \Tr \left( \delta_r A_i \delta_s A_i \right),
\end{equation}
is the metric on the moduli space. The motion of slow moving instantons is therefore described by geodesic motion on the moduli space with this metric. At each snapshot in time, the fields still look like a static instanton.

\subsection{The moduli space of dyonic instantons} \label{sec:moduli space potential}

The moduli space of dyonic instantons is the same as the moduli space of instantons. Each dyonic instanton has a unique underlying instanton so that the moduli spaces can be identified. The metric is unchanged by the non-zero scalar field but each dyonic instanton solution has an electric charge, $Q_E$, which varies across the moduli space. This provides a contribution to the energy of the field configurations and so introduces a potential on the moduli space.

In the moduli space approximation the scalar field has a time dependence through the moduli space coordinates, $\phi(\mathbf{z}(t); \mathbf{x})$, just as with the gauge field. Gauss's law is
\begin{equation}
D_i F_{i 0} + [D_0 \phi, \phi] = 0,
\end{equation}
but this is no longer satisfied by the moduli space ansatz. It can be solved approximately by perturbing $A_0$ away from its static value,
\begin{equation}
A_0 = \phi + \dot z^r \epsilon_r.
\end{equation}
The electric component of the field strength, $F_{i0}$, is then
\begin{equation}
F_{i0} = - (\dot z^r \delta_r A_i - D_i \phi). 
\end{equation}
The first term in Gauss's law, $D_i F_{i0}$, is still zero since the static equation of motion for $\phi$ is $D_i D_i \phi = 0$ and is unchanged by the additional time dependence. The second term in Gauss's law, $[D_0 \phi, \phi]$, is non-zero but is of order $\dot z^r |q|^2$ where $|q|$ is the magnitude of the VEV of $\phi$. As we will discuss below, this constrains the regime in which the moduli space approximation is valid.

Before constructing the potential for the effective action on the moduli space, we note that $D_i \phi$ satisfies the same conditions as $\delta_r A_i$ for being a zero mode. It is a solution to the linear self-dual equation in equation \eqref{eq:linear self-dual field equation} and satisfies the gauge fixing condition, $D_i D_i \phi = 0$. Since the zero-modes $\delta_r A_i$, $r = 1, \ldots 8k$ form a basis for zero-modes on the moduli space, we can express $D_i \phi$ as
\begin{equation}
D_i \phi = |q| K^r \delta_r A_i,
\end{equation}
for some vector $K^r$. Note that we have factored out the magnitude of the scalar VEV. If we consider $D_i \phi$ at infinity then it is a global gauge transformation by $q$. For $\SU(2)$ there will be three zero-modes corresponding to a global gauge transformation and since a gauge rotation is a symmetry of the full Yang-Mills theory, these transformations descend to Killing vectors on the moduli space. The vector $K^r$ is therefore a Killing vector of the metric, corresponding to a global gauge transformation in the direction of the VEV of $\phi$.

As for monopoles \cite{Bak:1999sv, Bak:1999da}, we can perform a coordinate transformation of the moduli space coordinates to write the electric components of the field strength as 
\begin{equation}
F_{i0} = -(\dot z^r - |q| K^r) \delta_r A_i = -\dot y^r \delta_r A_i,
\end{equation}
where
\begin{equation} \label{eq:moduli space coordinate transformation}
y^r = z^r - |q| K^r t.
\end{equation}
In these coordinates the effective action is 
\begin{align}
S &= \tfrac{1}{2} \int \dif{}^5x \, \Tr \left (F_{i0} F_{i0} - D_i \phi D_i \phi + D_0 \phi D_0 \phi \right)\\
& = \tfrac{1}{2} \int \dif t \, g_{rs} \dot y^r \dot y^s - |q|^2 g_{rs} K^r K^s.
\end{align}

We have neglected terms of order $\dot z^2 |q|^2$. This effective action is therefore a valid approximation to the low energy dynamics of dyonic instantons when
\begin{equation}
\dot z^2 \ll 1, \quad \text{and} \quad |q|^2 \ll 1,
\end{equation}
in comparison to the rest mass of the dyonic instantons. Physically, this is the requirement that the potential on the moduli space is shallow compared to the potential around the moduli space and that the kinetic energy is small. This prevents the motion from being able to climb up the sides of the potential surrounding the moduli space and move away from the regime in which the approximation is valid.

If we have a coordinate, $\theta$, which corresponds to a rotation in the unbroken $U(1)$ global gauge then the Killing vector is
\begin{equation}
K = \frac{\partial}{\partial \theta}.
\end{equation}
The $1/4$-BPS dyonic instanton solutions, with $\dot z(t) = 0$ in the old coordinate system, are now rotating through the global gauge at a constant velocity in the new coordinate system, $\dot \theta = |q|$. More generally they satisfy
\begin{equation}
\dot y^r = |q| K^r,
\end{equation}
which matches the original description of dyonic instantons \cite{Lambert:1999ua}. From now on will use $z^r$ to denote these new coordinates.

Note that the potential on the moduli space is expressed as the square of a Killing vector of the moduli space metric,
\begin{equation}
V = \tfrac{1}{2} \int \dif{}^5x \, \Tr \left (D_i \phi D_i \phi\right) =  \tfrac{1}{2} \int \dif t \, |q|^2 g_{rs} K^r K^s.
\end{equation}
If we were to consider the full supersymmetric Yang-Mills theory then we would also have fermionic zero modes and a supersymmetric effective action on the moduli space. This form of the potential would then be required by supersymmetry.

\section{The moduli space of two dyonic instantons} \label{sec:moduli space of two dyonic instantons}

So far we have given an overview of the moduli space of $\SU(2)$ (dyonic) instantons of general charge $k$. To proceed with our study of charge two dyonic instantons we need to explicitly find the metric and potential on the moduli space. The moduli space will be $16$ dimensional and the metric and potential will be expressed in terms of $16$ coordinates coming from the parameters in the ADHM construction. Four of these coordinates can be factored out as an uninteresting center of mass and the remaining $12$ are grouped as three quaternionic parameters.

To calculate the metric we follow the method of Osborn \cite{Osborn:1981yf} and Peeters and Zamaklar \cite{Peeters:2001np}. Thanks to the ADHM construction, the metric can be calculated from the ADHM data, $\Delta$, without having to explicitly work out the zero-modes, $\delta_r A_i$, and their inner products. Recall that the variation of the gauge field in the direction of one of the coordinates, $z^r$, corresponds to a tangent vector in the moduli space. The derivative, $\partial_r A_i$, of the gauge field with respect to $z^r$ will generally include a gauge transformation which must be projected out before calculating the metric, as in equation \eqref{eq:canonical zero-mode}. This projection can instead be performed directly in the ADHM data by a transformation of the form $\Delta \rightarrow Q\Delta R$. Finding the inner product is reduced to an algebraic problem that is tractable.

The most general ADHM data for a charge two instanton can be written in the form
\begin{equation} \label{eq:ADHM data}
\Delta(x) = \begin{pmatrix}
v_1 & v_2 \\
\tilde \rho + \tau & \sigma \\
\sigma & \tilde \rho - \tau
\end{pmatrix}
- x \begin{pmatrix}
0 & 0 \\
1 & 0 \\
0 & 1
\end{pmatrix}.
\end{equation}
This form has been chosen so that the parameters all have a direct physical interpretation. The two lower diagonal entries, $\tilde \rho + \tau$ and $\tilde \rho - \tau$, can be interpreted as four-vectors rather than quaternions and describe the positions of the two instantons on the D$4$-brane. The parameter $\tilde \rho$ is the centre of mass and will factor out into a flat direction in the metric so we will set it to zero for the rest of this paper. When $\tau$ is much larger than $|v_1|$ and $|v_2|$ the instantons are well separated and form two distinct lumps. Each lump can be approximated by a charge one instanton which is rotationally symmetric. A cross section of the topological charge and scalar field is shown in Figure \ref{fig:separated instantons}. As $\tau$ decreases the individual lumps come closer together and begin to deform into each other. When the magnitude of $\tau$ is equal to the magnitude of $\sigma$, the instantons are coincident and form a single lump at the origin with axial symmetry. A cross section of the topological charge and scalar field for coincident charge two instantons is shown in Figure \ref{fig:coincident instantons}. We will discuss the role of $\sigma$ and the behaviour of coincident instantons more in Section \ref{sec:right angled scattering}. The form of the scalar field for charge two and higher dyonic instantons has been studied in detail in \cite{Kim:2003gj, Choi:2007mk}. The zeroes of the scalar field correspond to where the D$4$-branes intersect and these form a circle for coincident instantons. As the instantons separate the circle of zeroes pinches off into two loops which shrink down to be point like. This has the interpretation of supertubes between the D$4$-branes which collapse as the instantons become well separated.

\begin{figure}
  \begin{subfigure}[b]{0.47\textwidth}
    \centering
    \includegraphics[width=\textwidth]{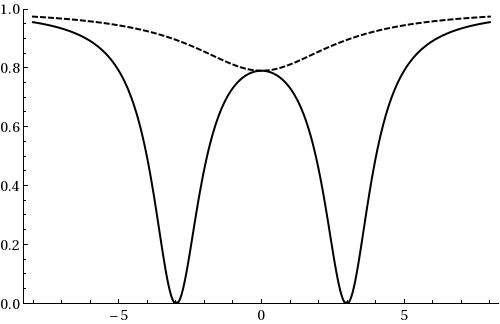}
    \caption{The scalar field.}
    \end{subfigure}
  \hfill
  \begin{subfigure}[b]{0.47\textwidth}
    \centering
    \includegraphics[width=\textwidth]{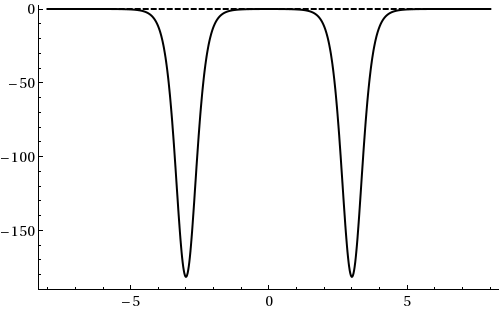}
    \caption{The topological charge density.}
  \end{subfigure}
  \caption{Two separated instantons. The solid lines show the fields along the $x_4$ (real) axis and the dashed lines show the fields along the $x_2$-axis. This configuration corresponds to the ADHM parameters $v_1 = e_4$, $v_2 = e_1$ and $\tau = 3 e_4$. Note that $|\sigma| = \tfrac{1}{6}$.}
  \label{fig:separated instantons}
\end{figure}

\begin{figure}
  \begin{subfigure}[b]{0.47\textwidth}
    \centering
    \includegraphics[width=\textwidth]{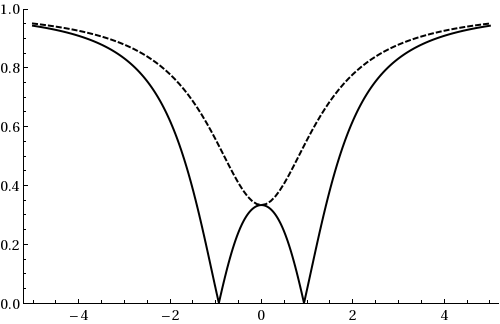}
    \caption{The scalar field.}
  \end{subfigure}
  \hfill
  \begin{subfigure}[b]{0.47\textwidth}
    \centering
    \includegraphics[width=\textwidth]{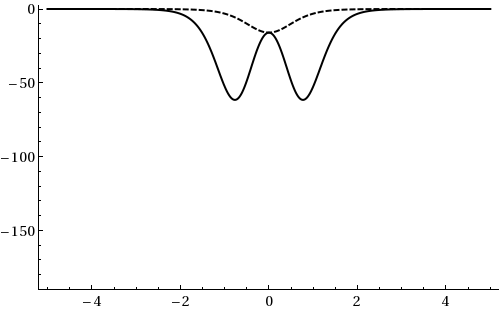}
    \caption{The topological charge.}
  \end{subfigure}
  \caption{Two coincident instantons. The solid lines show the fields along the $x_4$ (real) axis and the dashed lines show the fields along the $x_2$-axis. Note that the fields are rotationally symmetric in the $(x_4, x_1)$ plane when coincident. This configuration corresponds to the ADHM parameters $v_1 = e_4$, $v_2 = e_1$ and $\tau = \tfrac{1}{\sqrt{2}} e_4$. Note that $|\sigma| = |\tau|$.}
  \label{fig:coincident instantons}
\end{figure}

The magnitudes of $v_1$ and $v_2$ describe the size of each instanton. The unit quaternions, $\hat v_1$ and $\hat v_2$ describe their embedding in the gauge group. For well separated instantons the gauge field is approximately that of two superimposed 't Hooft instantons, but with a different global gauge transformation applied to each one,
\begin{equation}
A_i \approx \frac{|v_1|^2(x-\tau)_j \eta^a_{ij} }{|x-\tau|^2 (|x-\tau|^2 + |v_1|^2)}\, \hat v_1 \sigma^a \hat v_1^\dagger +  \frac{|v_2|^2(x+\tau)_j \eta^a_{ij}}{|x+\tau|^2 (|x+\tau|^2 + |v_2|^2)}\, \hat v_2 \sigma^a \hat v_2^\dagger.
\end{equation}
Recall that $\SU(2)$ is isomorphic to the unit quaternions so the action of $\hat v_1$ and $\hat v_2$ is just that of a gauge transformation on each separate lump. We could use the global gauge symmetry to set one of $\hat v_1$ or $\hat v_2$ to the identity matrix but we will still keep these explicit since the global gauge rotation is a relevant parameter in the moduli space and plays an important role in the dynamics. The relative gauge alignment, $\hat v_1^\dagger v_2$, of well separated instantons is physically significant even in the static case. Of course, when the instantons are close together the gauge field is more complicated and there is no clear notion of separate lumps or of the relative gauge alignment between them.

The parameter $\sigma$ is fixed by the ADHM constraint to be
\begin{equation} \label{eq:sigma definition}
\sigma = \frac{\tau}{4 |\tau|^2} \Lambda + \alpha \tau, \quad \text{where} \quad \Lambda = (\bar v_2 v_1 - \bar v_1 v_2),
\end{equation}
for some real value of $\alpha$. The symmetry of the ADHM data in equation \eqref{eq:rotation of ADHM data} always allows us to set $\alpha$ to zero. In doing so we break the continuous $\O(2)$ symmetry in equation \eqref{eq:rotation of ADHM data} to a discrete subgroup.

The calculation of the metric is quite long so to maintain the flow of this paper we will present only the final result here with the full calculation relegated to Appendix \ref{ap:calculation of metric}. In terms of the parameters $\tau$, $v_1$, and $v_2$, the metric on the moduli space is
\begin{equation} \label{eq:full metric}
\begin{split}
\frac{\dif s^2}{8 \pi^2}
&= \dif v_1^2 + \dif v_2^2 + \dif \tau^2 \\
&\qquad + \frac{1}{4|\tau|^2} \Big(|v_1|^2 \dif v_2^2 + |v_2|^2 \dif v_1^2 + 
    2 (v_1 \cdot \dif v_1) (v_2 \cdot \dif v_2) - (v_1 \cdot \dif v_2)^2 \\
&\hspace{0.6in} - (v_2 \cdot \dif v_1)^2 - 2 (v_1 \cdot v_2) (\dif v_1 \cdot \dif v_2) + 2 \varepsilon_{ijkl} v_1^i v_2^j \dif v_1^k \dif v_2^l \Big) \\
&\qquad + \frac{1}{4|\tau|^4} \left(|v_1|^2 |v_2|^2 - (v_1 \cdot v_2)^2 \right) \dif\tau^2 \\
&\qquad - \frac{1}{2|\tau|^4} \Big(|v_1|^2 (v_2 \cdot \dif v_2) + |v_2|^2 (v_1 \cdot \dif v_1)\\ 
&\hspace{0.6in} - (v_1 \cdot v_2)(v_1 \cdot \dif v_2) - (v_1 \cdot v_2)(v_2 \cdot \dif v_1) \Big) \tau \cdot \dif \tau \\
&\qquad +\frac{1}{8|\tau|^4} \Big( \varepsilon_{ijkl} \Lambda_i \dif \Lambda_j \tau_k \dif \tau_l +(\Lambda \cdot \dif \tau)(\tau \cdot \dif \Lambda) - (\Lambda \cdot \tau) (\dif \Lambda \cdot \dif \tau)  \Big) \tau \cdot \dif \tau \\
&\qquad - \frac{1}{N_A} \Big(v_1 \cdot \dif v_2 - v_2 \cdot \dif v_1 \\
&\hspace{0.6in} - \frac{2}{|\tau|^2} \big(
       \varepsilon_{mnpq} v_2^m v_1^n \tau^p \dif\tau^q + (v_2 \cdot \tau) (v_1 \cdot \dif\tau) - (v_1 \cdot \tau) (v_2 \cdot \dif\tau)
     \big)
    \Big)^2,
\end{split}
\end{equation}
where
\begin{equation}
N_A = |v_1|^2 + |v_2|^2 + 4|\tau|^2 + \frac{1}{4|\tau|^2}|\bar v_2 v_1 - \bar v_1 v_2|^2.
\end{equation}
The terms on the first line of the metric correspond to the individual movement of each instanton lump. The remaining terms describe the interaction of the two instanton lumps and the behaviour when it is not possible to distinguish two separate lumps. Note that these terms fall off quadratically as the separation is increased. This metric has been previously calculated up to order $|\tau|^{-2}$ in reference \cite{Peeters:2001np} although we point out that our calculation differs by a factor of two in the final line.

The potential on the moduli space can be calculated directly from the ansatz for $\phi$ in equation \eqref{eq:phi ansatz}. Again we will present only the result here and leave the calculation for Appendix \ref{ap:calculation of potential}. The potential is
\begin{equation}
V = 8 \pi^2 |q|^2 \del{ |v_1|^2 + |v_2|^2 - \frac{1}{N_A}|\bar v_2 \hat q v_1 - \bar v_1 \hat q v_2|^2 },
\end{equation}
where $q$ is a pure imaginary quaternion, $iq$ is the VEV of $\phi$ and $\hat q$ is the unit quaternion in the direction of $q$. As with the metric, the first two terms are the potentials for each instanton lump individually. The final term describes their interaction.

It is well known that the moduli space has singularities corresponding to instantons of zero size. These can be understood as the conical singularities where the moduli space is quotiented by discrete symmetries in the ADHM data. By fixing the parameter $\alpha = 0$ we have broken the continuous $\O(2)$ symmetry of the ADHM data in equation \eqref{eq:rotation of ADHM data}. However, there still remains a discrete subgroup of symmetries. The moduli space is quotiented by these symmetries since they identify equivalent parameterisations of the ADHM data which correspond to the same gauge field. The moduli space therefore has an orbifold structure with conical singularities at the fixed points of these symmetries.

Consider a transformation of the ADHM data where $R$ in equation \eqref{eq:rotation of ADHM data} is a rotation matrix. This gives the equivalent parameterisation:
\begin{align}
\begin{split} \label{eq:discrete rotation of ADHM data}
\tilde v_1 &= v_1 c - v_2 s, \qquad \tilde v_2 = v_1 s + v_2 c, \\
\tilde \tau &= (c^2 - s^2) \tau - 2 cs \,\sigma, \\
\tilde \sigma &= (c^2 - s^2) \sigma + 2cs\, \tau,
\end{split}
\end{align}
where $c = \cos(\theta)$ and $s = \sin(\theta)$. If $R$ is a reflection matrix instead then,
\begin{align}
\begin{split} \label{eq:discrete reflection of ADHM data}
\tilde v_1 &= v_1 c + v_2 s, \qquad \tilde v_2 = v_1 s - v_2 c, \\
\tilde \tau &= (c^2 - s^2) \tau + 2 cs \,\sigma, \\
\tilde \sigma &= - (c^2 - s^2) \sigma + 2cs\, \tau.
\end{split}
\end{align}
For these to leave $\alpha = 0$ invariant we must have either
\begin{equation}
c^2 - s^2 = 0, \quad \text{or} \quad cs = 0,
\end{equation}
so that the remaining discrete symmetries are given by rotations or reflections with angle $\theta = \tfrac{n \pi}{4}$ for $n = 0, \ldots, 7$. These form the Dihedral group of order $16$.

Let us consider the action of these remaining symmetries based on their physical interpretation.
\begin{enumerate}
  \item{
    $\mathbf{c = \pm 1, s = 0}$. Under these symmetries, $v_1$ and $v_2$ are unchanged or negated. Consider a reflection with $c=-1$ where $v_1$ goes to $-v_1$. The moduli space is therefore quotiented by a $Z_2 = \{-1, 1\}$ symmetry with a fixed point at $v_1 = 0$. The moduli space has the topology of a cone around the point $v_1 = 0$, which is a conical singularity. The same arguments apply to the point $v_2 = 0$ when $c = 1$. These singularities correspond to an instanton shrinking to zero size.
  }
  \item{
    $\mathbf{c = 0, s = \pm 1}$. Under these symmetries, $v_1$ and $v_2$ swap roles, with a possible change in sign. The parameter describing the instanton separation, $\tau$, is negated. This corresponds to a relabelling of the instantons so that the instanton described by $v_1$ is now described by $v_2$ and vice-versa. The fixed points of these symmetries are when $v_1$ and $v_2$ are equal up to a sign, and $\tau = 0$. The singularities at these fixed points are the same singularities described above, but in a different parameterisation of the moduli space. To see this, consider the following two equivalent parameterisations,
    \begin{equation}
    \tau = \varepsilon, \quad \sigma = i, \quad v_1 = 1 + i \varepsilon \quad \text{and} \quad v_2 = 1 - i \varepsilon,
    \end{equation}
    and
    \begin{equation}
    \tilde \tau = i, \quad \tilde \sigma = \varepsilon, \quad \tilde v_1 = \sqrt{2} \quad \text{and} \quad \tilde v_2 = \sqrt{2} i \varepsilon,
    \end{equation}
    which are identified under a reflection with $\theta = \tfrac{\pi}{4}$. As $\varepsilon \rightarrow 0$, the first of these parameterisations approaches the singularity here. However this is equivalent to the second parameterisation which approaches the zero size instanton singularity mentioned above.
  }
  \item{
    $\mathbf{c = \pm \tfrac{1}{\sqrt{2}}, s = \pm \tfrac{1}{\sqrt{2}}}$. These combine $v_1$ and $v_2$ in a linear combination, and swap the roles of $\tau$ and $\sigma$. The physical interpretation of this symmetry is less obvious but we will discuss it further in Section \ref{sec:right angled scattering} and see that it is responsible for right angled scattering.
  }
\end{enumerate}

From the string theory viewpoint, the zero size singularities arise from transition between the Higgs and Coloumb branches of the D$4$-D$0$ brane system. It is natural that the world volume description should break down at this point.

The moduli space has $12$ parameters excluding the centre of mass and integrating the equations of motion on this full space is numerically expensive.  We can reduce the range of parameters that we need to consider at once by finding geodesic submanifolds of the moduli space\footnote{We use the term geodesic loosely to also include motion on the moduli space in the presence of a potential.}. If our initial conditions lie within a geodesic submanifold then the evolution will remain within the submanifold for all time. A simple way of finding geodesic submanifolds is as the fixed points of symmetries of the metric and potential.

To be able to see the symmetries more explicitly, let us write the metric in the unexpanded form
\begin{equation}
\dif s^2 = 8 \pi^2 \del{ \dif v_1 \cdot \dif v_1 + \dif v_2 \cdot \dif v_2 + \dif \tau \cdot \dif \tau + \dif \sigma \cdot \dif \sigma - \frac{\dif k^2}{N_A} },
\end{equation}
where
\begin{equation}
N_A = |v_1|^2 + |v_2|^2 + 4(|\tau|^2 + |\sigma|^2),
\end{equation}
and
\begin{equation}
\dif k = \bar v_1 \dif v_2 - \bar v_2 \dif v_1 + 2 (\bar \tau \dif \sigma - \bar \sigma \dif \tau),
\end{equation}
is real, recalling that $\sigma$ is given by
\begin{equation}
\sigma = \frac{\tau}{4|\tau|^2} (\bar v_2 v_1 - \bar v_1 v_2).
\end{equation}
The potential is
\begin{equation}
V = 8 \pi^2 |q|^2 \del{ |v_1|^2 + |v_2|^2 - \frac{1}{N_A}|\bar v_2 \hat q v_1 - \bar v_1 \hat q v_2|^2 }.
\end{equation}

The first symmetry that we will consider is conjugation by a quaternion,
\begin{equation}
  v_1 \rightarrow p v_1 \bar p, \quad 
  v_2 \rightarrow p v_2 \bar p, \quad 
  \tau \rightarrow p \tau \bar p, \quad \text{under which} \quad \sigma  \rightarrow p \sigma \bar p,
\end{equation}
where $p$ is a unit quaternion. This is a symmetry of the metric for any $p$ but is only a symmetry of the potential when $p \in \spn \{1, q\}$. The fixed points of this $\U(1)$ action are therefore a half-dimensional moduli space. This is in fact the Hanany-Tong correspondence \cite{Hanany:2003hp} (and see also \cite{Tong:2005un}) between instanton and vortex moduli spaces. Clearly we are still left with the size parameters so it appears that the correspondence is with the semi-local vortices in the $\U(1)$ theory with 2 flavours. More precisely, since we do not have a non-commutative deformation for the instantons, we are in the strong gauge coupling limit for the 2+1 dimensional Yang-Mills-Higgs theory with vortices. This means that this half-dimensional moduli space describes the moduli space of charge $2$ $\C P^1$ (or $\O(3)$) $\sigma$-model lumps. A non-zero scalar VEV in the 4+1 dimensional theory results in a potential on the moduli space, and this potential also appears in the vortex (and $\sigma$-model lump) moduli space where the scalar VEV corresponds turning on masses for the hypermultiplets.

Without loss of generality we can take $q$ to be in the direction $e_1$ so that the geodesic submanifold consists of the points when $v_1$, $v_2$ and $\tau$ are only complex valued, with their $e_2$ and $e_3$ components set to zero. This describes the instantons moving in a two dimensional plane of the full four dimensional space, with each instanton having a gauge orientation in the remaining unbroken $\U(1)$ given by the complex phase of $v_1$ and $v_2$. The metric simplifies on this subspace since many of the terms vanish when restricted to only complex values. It is convenient to parameterise this complex submanifold by polar coordinates \cite{Peeters:2001np},
\begin{align}
v_1 &= \rho_1 (e_4 \cos \theta_1 + e_1 \sin \theta_1), \\
v_2 &= \rho_2 (e_4 \cos \theta_2 + e_1 \sin \theta_2), \\
\tau &= \omega (e_4 \cos \chi + e_1 \sin \chi).
\end{align}
The angles can be combined into a relative and overall gauge rotation,
\begin{align}
\phi   &= \theta_1 - \theta_2, \\
\Theta &= \theta_1 + \theta_2.
\end{align}
The metric and potential on this complex submanifold are then
\begin{align}
\begin{split} \label{eq:polar coordinate metric}
\frac{\dif s^2}{8\pi^2} &= \dif \rho_1^2 + \dif \rho_2^2 + \tfrac{1}{4} ( \rho_1^2 + \rho_2^2 ) (\dif \Theta^2 + \dif \phi^2) + \tfrac{1}{2} ( \rho_1^2 - \rho_2^2 ) \, \dif \Theta \, \dif \phi \\
&\qquad + \,  \frac{1}{4\omega^2} \left( \dif\,(R \sin \phi) \right)^2 + \left( 1 + \frac{1}{4\omega^4} R^2 \sin^2 \phi \right) (\dif \omega^2 + \omega^2 \dif \chi^2) \\
&\qquad -\frac{1}{2 \omega^4} \left( R \sin^2\phi\, ( \rho_1 \dif \rho_2 + \rho_2 \dif \rho_1 ) + R^2 \cos{\phi} \sin{\phi} \, \dif \phi \right) \omega \, \dif \omega \\
&\qquad -\frac{1}{N_A} \left( \cos{\phi}\, ( \rho_1 \dif \rho_2 - \rho_2 \dif \rho_1 ) +  R \sin \phi \, (\dif \Theta - 2 \dif \chi) \right)^2,
\end{split}
\end{align}
and
\begin{equation}
\frac{V}{8 \pi^2} =  |q|^2 \left( \rho_1^2 + \rho_2^2 - \frac{4}{N_A} R^2 \sin^2 \phi \right),
\end{equation}
where
\begin{equation}
N_A = \rho_1^2 + \rho_2^2 + 4 \omega^2 + \frac{R^2}{\omega^2} \sin^2 \phi, \qquad R \equiv \rho_1 \rho_2.
\end{equation}
Note that the metric has no functional dependence on $\Theta$ and $\chi$. These correspond to the overall gauge rotation and spatial rotation of the instantons respectively. This is to be expected as these are symmetries of the full field theory and so descend to Killing vectors on the moduli space. In this parameterisation it is clear that $V$ is the square of the Killing vector corresponding to rotations by $\Theta$, as described in Section \ref{sec:moduli space potential}.

We can further restrict to a submanifold of this complex submanifold by relating the two instantons' sizes and gauge angles. Consider the symmetry,
\begin{equation}
v_1 \rightarrow e_1 v_2, \quad v_2 \rightarrow - e_1 v_1.
\end{equation}
The fixed points of this are when $v_1 = e_1 v_2$, or in our polar coordinate parameterisation,
\begin{equation}
\rho_1 = \rho_2, \quad
\theta_1 = \theta_2 - \tfrac{\pi}{2}.
\end{equation}
On this submanifold we will drop the subscripts on $\rho$ and $\theta$ since they are unnecessary. The metric and potential are
\begin{align}\label{eq:orthogonal metric}
\begin{split}
\frac{\dif s^2}{8 \pi^2} &= 2 \dif \rho^2 + 2 \rho^2 \dif \theta^2 + \frac{\rho^2}{\omega^2} \dif \rho^2 + \del{1 + \frac{\rho^4}{\omega^4}}\del{\dif \omega^2 + \omega^2 \dif \chi^2} \\
&\qquad -\frac{\rho^3}{\omega^3} \dif \rho \dif \omega - \frac{4}{N_A} \rho^4 \del{\dif \theta - \dif \chi}^2,
\end{split} \\
\frac{V}{8 \pi^2} &= q^2 \del{ 2 \rho^2 - \frac{4}{N_A} \rho^4 }, \label{eq:orthogonal potential} \\
N_A &= 2\rho^2 + 4\omega^2 + \frac{\rho^4}{\omega^2}.
\end{align}

We will see that this geodesic submanifold has very similar properties to the charge two Q-lump moduli space for a deformed $\O(3)$ $\sigma$-model, studied by Leese \cite{Leese1991283}. The deformation leads to a potential on the moduli space of $\sigma$-model lumps. The reduction from a 6- to 4-dimensional moduli space is because in the 2+1 dimensional theory, some of the moduli correspond to non-normalisable modes and so are frozen. The Hanany-Tong correspondence does not remove these modes which include the relative size and orientation of the 2 lumps \cite{Ward:1985ij, Leese:1992fn}. However, this further reduced moduli space removes exactly those modes, hence the qualitative similarities.

We can also consider a related symmetry where
\begin{equation}
v_1 \rightarrow v_2, \quad v_2 \rightarrow v_1.
\end{equation}
The fixed points of this are when $v_1 = v_2$, but the metric and potential on this submanifold just reduce to that of two non-interacting charge one dyonic instantons. We can conclude that when the dyonic instanton's gauge alignments are parallel they do not interact with each other.

There is another interesting symmetry which will be relevant to our discussion of localised charge two dyonic instantons. We swap $\tau$ and $\sigma$ with a quaternionic phase,
\begin{equation}
  \tau \rightarrow p \sigma, \quad \sigma \rightarrow -p \tau,
  \end{equation}
  where $p$ is a purely imaginary unit quaternion. This has a fixed point when $\tau = p \sigma$ so that $|\tau| = |\sigma|$. We will see in Section \ref{sec:right angled scattering} that this corresponds to the instantons being coincident. The magnitude of $\tau$ must be fixed by
  \begin{equation}
  |\tau|^2 = \tfrac{1}{4}|\bar v_2 v_1 - \bar v_1 v_2|,
\end{equation}
which will remove a parameter from the metric on this submanifold. This submanifold is also invariant under the symmetries of ADHM data and is a natural boundary on the fundamental domain of the moduli space.

\section{Dynamics of a single instanton} \label{sec:dynamics of a single instanton}

Before looking at interacting instantons it will be useful to briefly review the dynamics of a single instanton.

The effective action for a single dyonic instanton rotating in only one direction in the gauge group is,
\begin{equation}
S = 8 \pi^2 \int \dif t \, \dot \rho^2 + \rho^2 \dot \theta^2 - |q|^2 \rho^2,
\end{equation}
where $\rho$ is the size of the instanton and $\theta$ is its $\U(1)$ gauge angle. This can be worked out directly from the inner product of zero-modes of the 't Hooft ansatz \cite{Peeters:2001np} or from the ADHM data as in Appendix \ref{ap:calculation of metric}. The equation of motion for the gauge angle is a conservation law for gauge angular momentum,
\begin{equation}
\rho^2 \dot \theta = L,
\end{equation}
where $L$ is some constant. The equation of motion for $\rho$ is
\begin{equation}
\ddot \rho - \rho \dot \theta^2 + |q|^2 \rho = 0.
\end{equation}
We can replace $\dot \theta$ by the angular momentum so that
\begin{equation}
\ddot \rho - \frac{L^2}{\rho^3} + |q|^2 \rho = 0.
\end{equation}

In the absence of a potential ($|q| = 0 $), pure instantons suffer from a slow-roll instability where a small perturbation to the static instanton will result in the instanton spreading out at a constant velocity. Eventually the instanton will be spread over the entire space or hit the zero size singularity. We can easily see this behaviour on the moduli space since the metric in the effective action is flat and the equation of motion for $\rho$ becomes $\ddot \rho = 0$ in the absence of any angular velocity. 

The effective action for a dyonic instanton includes a potential term which stabilises the lumps at a fixed size. We can see from the equation of motions that when $\dot \theta = v$ the instanton size and rotational velocity remain constant. This describes a static dyonic instanton which satisfies the BPS equations and equations of motion exactly. The apparent motion on the moduli space is due to the coordinate transformation that we made in equation \eqref{eq:moduli space coordinate transformation}.

If we think of this motion as a particle rolling around a potential like a marble in a bowl then it is clear that this system is now stable to perturbations in the instanton's size; a small initial velocity for $\rho$ sets up an oscillation around the initial value of $\rho$, but it will not increase indefinitely. The upper and lower bounds of the oscillation are proportional to the initial perturbation.

Generically the dyonic instanton will oscillate in size with an amplitude given by \cite{Peeters:2001np},
\begin{equation}
\rho = \sqrt{ A \sin(2|q|(t + t_0))) + \sqrt{\frac{L^2}{|q|^2} + A^2} }.
\end{equation}
The smaller the initial angular velocity, the less angular momentum the instanton has and the closer is comes to zero size. The larger the initial change in size, the larger the amplitude of the oscillation and again the closer it will come to zero size. The instanton can oscillate out to arbitrary size for a sufficiently large initial $\dot \rho$ but will always turn around before reaching $\rho = 0$ unless the angular momentum is zero.

\section{Instanton scattering} \label{sec:right angled scattering}

Right angled scattering is a common feature in soliton systems and in this section we see that instantons are no exception. Recall that right angled scattering of monopoles can be understood from the conical structure of the monopole moduli space under an identification of the two incoming monopoles. The instanton moduli space is more complicated and does not have this simple conical structure but we can still understand right angled scattering on it by considering the underlying symmetries of the ADHM data which quotient the moduli space.

When the parameter $\tau$ in the ADHM data is large it describes two well separated instantons at $\tau$ and $-\tau$, as can be seen in Figure \ref{fig:effect of tau when well separated}. This interpretation is less clear when $\tau$ is of a similar magnitude to $\sigma$. As a first observation, we note that the instantons are coincident when $|\tau| = |\sigma|$. Coincident instantons are axially symmetric as in Figure \ref{fig:effect of tau when coincident}. When the magnitude of $\tau$ is less than the magnitude of $\sigma$ the instantons separate again but at right angles, as seen in Figures \ref{fig:effect of tau when scattered 1} and \ref{fig:effect of tau when scattered 2}. Clearly $\sigma$ also plays an important role in describing the separation.

\begin{figure}
  \begin{subfigure}[b]{0.3\textwidth}
    \includegraphics[width=\textwidth]{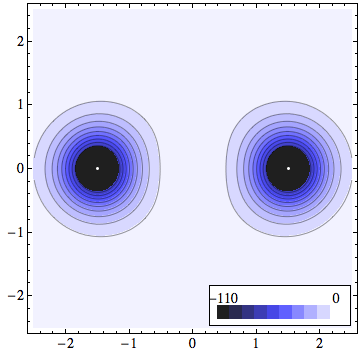}
    \subcaption{$|\tau| = 1.5$ ($|\sigma| = \tfrac{1}{3}$)} \label{fig:effect of tau when well separated}
  \end{subfigure} \hfill
  \begin{subfigure}[b]{0.3\textwidth}
    \includegraphics[width=\textwidth]{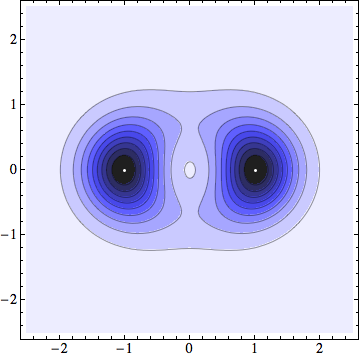}
    \subcaption{$|\tau| = 1$ ($|\sigma| = \tfrac{1}{2}$)}
  \end{subfigure} \hfill
  \begin{subfigure}[b]{0.3\textwidth}
    \includegraphics[width=\textwidth]{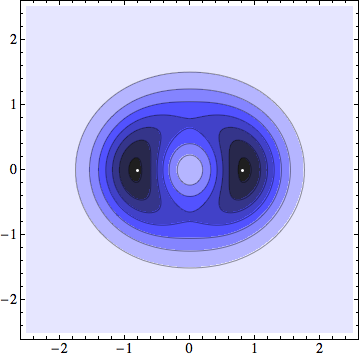}
    \subcaption{$|\tau| = 0.8$ ($|\sigma| = 0.625$)}
  \end{subfigure} \\[10pt]
  \begin{subfigure}[b]{0.3\textwidth}
    \includegraphics[width=\textwidth]{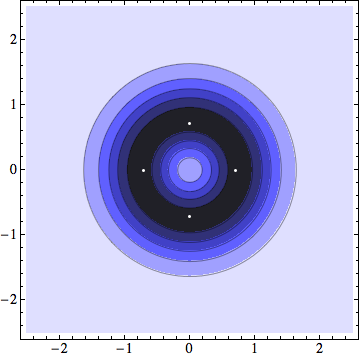}
    \subcaption{$|\tau| = \tfrac{1}{\sqrt{2}}$ ($|\sigma| = \tfrac{1}{\sqrt{2}}$)} \label{fig:effect of tau when coincident}
  \end{subfigure} \hfill
  \begin{subfigure}[b]{0.3\textwidth}
    \includegraphics[width=\textwidth]{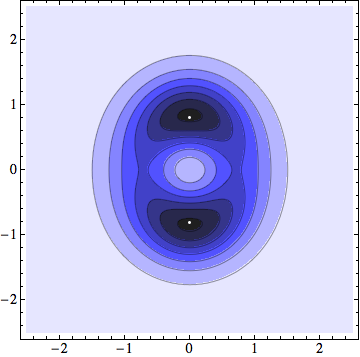}
    \subcaption{$|\tau| = \tfrac{1}{2} $ ($|\sigma| = 1$)} \label{fig:effect of tau when scattered 1}
  \end{subfigure} \hfill
  \begin{subfigure}[b]{0.3\textwidth}
    \includegraphics[width=\textwidth]{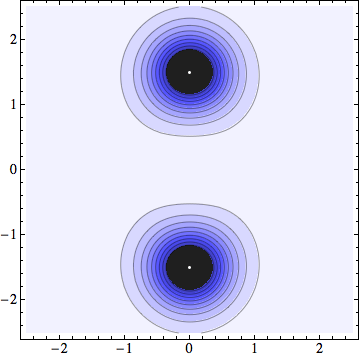}
    \subcaption{$|\tau| = \tfrac{1}{3}$ ($|\sigma| = 1.5$)} \label{fig:effect of tau when scattered 2}
  \end{subfigure}
  \caption{The topological charge density of a charge two instanton at various values of $\tau$. Each figure shows the values of the charge density on the complex plane, at zero in the $e_2$ and $e_3$ quaternion directions. Each contour shows a fixed value of the charge density with the lighter areas corresponding to lower charge. The instantons have size $\rho = 1$ and have an orthogonal gauge alignment ($\phi = \tfrac{\pi}{2}$). The value of $\tau$ is real. The white dots mark the positions of $\tau$ and $-\tau$ when $|\tau| \ge |\sigma|$ and the positions of $\sigma$ and $-\sigma$ when $|\tau| \le |\sigma|$. }
  \label{fig:effect of tau}
\end{figure}

This relationship between $\tau$ and $\sigma$ can be understood from the symmetries of the ADHM data which swap the roles of $\tau$ and $\sigma$. For example, consider a transformation of the ADHM data as in equation \eqref{eq:discrete reflection of ADHM data} by a reflection with angle $\theta = \tfrac{\pi}{4}$,
\begin{equation} \label{eq:equivalent ADHM parameters}
\tilde \Delta = \begin{pmatrix}
\tfrac{1}{\sqrt{2}} ( v_1 + v_2 ) & \tfrac{1}{\sqrt{2}} ( v_1 - v_2 ) \\
\sigma & \tau \\
\tau   & -\sigma
\end{pmatrix}.
\end{equation}
This symmetry leaves the fields unchanged so these parameters must have an equivalent physical interpretation as those in the original ADHM parameterisation. It follows that $\sigma$ must have an equal claim to describe the instantons' separation.

Recall that $\sigma$ is given by
\begin{equation}
\sigma = \frac{\tau}{4 |\tau|^2} \Lambda, \quad \text{where} \quad \Lambda = \bar v_2 v_1 - \bar v_1 v_2,
\end{equation}
and so has a magnitude inversely proportional to the magnitude of $\tau$. When $\tau$ is large, $\sigma$ is small and $\tau$ provides a good description of the instantons' separation. However, as $\tau$ grows smaller and is of a similar size to $\sigma$ this description breaks down and the instantons are close to coincident. As $\tau$ goes to zero, $\sigma$ grows large and instead takes on the role of the separation.

Right angled scattering occurs because $\sigma$ is equal to $\tau$ multiplied by a pure imaginary quaternion, $\Lambda$. Treating $\sigma$ and $\tau$ as $4$-vectors, their inner product is zero,
\begin{equation}
\sigma \cdot \tau = 0
\end{equation}
so they lie orthogonal to each other. When $\sigma$ takes over the role of the separation, the instantons will be separated at right angles to the previous direction $\tau$. This behaviour is what we see in Figure \ref{fig:effect of tau}.

The direction in which the instantons separate will be determined by the direction of $\sigma$. The rotation compared to their incoming direction, $\tau$, is determined by the purely imaginary quantity $\bar v_2 v_1 - \bar v_1 v_2$. There are three distinct cases although a general scattering may be some combination of these:
\begin{enumerate}
\item{
  \underline{$\tau$, $v_1$ and $v_2$ in the same plane}. When the gauge embeddings of the instantons are in the same plane as their separation, the instantons will scatter orthogonally to $\tau$ in this plane. This is the only situation possible in the complex geodesic submanifold in equation \eqref{eq:polar coordinate metric}.
}
\item{
  \underline{$\tau$ and $v_1$ in the same plane with $v_2$ orthogonal}. When $v_2$ is orthogonal to this plane, the instantons will scatter in the direction of $v_2$. Similarly for $v_1$ and $v_2$ reversed.
}
\item{
  \underline{$\tau$, $v_1$ and $v_2$ all mutually orthogonal}. When the gauge embeddings are orthogonal to each other and the instantons' separation, they will scatter in the remaining direction orthogonal to $\tau$, $v_1$ and $v_2$.
}
\end{enumerate}
Recall that when $v_1$ and $v_2$ are parallel the instantons do not interact. In this case $\sigma$ is zero and $\tau$ always describes their separation. The instantons do not scatter and will instead pass through each other.

As an alternative interpretation, we note that the ADHM data naturally splits into two parts: $\Lambda$, describing the instanton sizes and gauge alignments, and $\Omega$, describing the instantons positions. When an $N \times N$ matrix describes the positions of $N$ D-branes, it is the eigenvalues that actually correspond to the physical positions and for complex valued $\tau$ and $\sigma$ the eigenvalues of $\Omega$ are
\begin{equation}
\pm \sqrt{\tau^2 + \sigma^2}.
\end{equation}
The eigenvalues are approximately equal to $\pm \tau$ when $\tau$ is large and to $\pm \sigma$ when $\tau$ is small. The eigenvalues will be zero when the instantons are coincident and $|\tau| = |\sigma|$. The eigenvalues are rotated by $90^\circ$ in the complex plane when they pass through zero due to a change in sign inside the square root.

We can briefly compare this behaviour to right angled scattering in monopoles. For two monopoles in $\SU(2)$, the moduli space is the Atiyah-Hitchin manifold which has a two dimensional geodesic submanifold corresponding to motion in a plane. This submanifold has the topology of a cone since the system is identical under a rotation by $180^\circ$ around the origin. Head on scattering is described by a geodesic which passes over the vertex of the cone and therefore emerges at $90^\circ$ relative to where it came in. The subspace is smooth at this vertex although the angle jumps by $\tfrac{1}{2}\pi$, as expected from passing through the origin in polar coordinates.

For two instantons, the moduli space also has a geodesic submanifold corresponding to motion in a plane. The metric of this is given in equation \eqref{eq:polar coordinate metric}. This space is still six dimensional and it is not possible to give as simple a description of right-angled scattering as for monopoles. Each instanton has a unique identity and the symmetry under a rotation by $180^\circ$ no longer exists. Instead, we can understand the $90^\circ$ scattering through the  symmetry of the ADHM data as described above and given in equation \eqref{eq:equivalent ADHM parameters}.

So far we have only considered what happens at individual points on the moduli space. From our understanding of the moduli space parameters we expect to see right angled scattering in the geodesic motion of two instantons whenever the magnitude of $\tau$ passes through $|\tau| = |\sigma|$. This is inevitable if $|\tau|$ is decreasing. 

We cannot numerically integrate the equations of motion for a head on collision between two instantons because the symmetry between $\tau$ and $\sigma$ manifests as a discontinuous jump of parameters in the geodesic evolution. This jump is between equivalent parameterisations and so is smooth on the moduli space, but prevents us from finding a numerical solution. However, we can still explore head on collisions by examining the behaviour as the impact parameter goes to zero.

We will set up then instantons with the initial conditions shown in Figure \ref{fig:parameters}. The only parameters which are not shown are the gauge angles, $\theta_1$ and $\theta_2$. The overall gauge angle, $\theta = \theta_1 + \theta_2$ does not have a physical effect on the static instantons, but it is important in the dynamics. The dynamics are invariant under the initial value of $\theta$ so there is no need to specify it when we list initial conditions. Only the relative angle, $\phi = \theta_1 - \theta_2$ needs to be specified. It is convenient to work with a slightly different parameterisation of our initial positions; we introduce the impact parameter, $b$, and the separation along the $x$-axis, $x$, as shown in Figure \ref{fig:parameters}. To consider the scattering of two instantons we start with well separated static instantons and send them towards each other with an initial velocity parallel to the $x$-axis, $\dot x = -v$. Ideally we are interested in the behaviour as the instantons come from infinity but we will settle on $\omega = 50$ as a practical initial separation in our numerical study. Unless otherwise stated we will take the incoming velocity $v$ to be $v = 0.03$ and the initial instanton sizes to be $\rho_1 = \rho_2 = 1$.

\begin{figure}
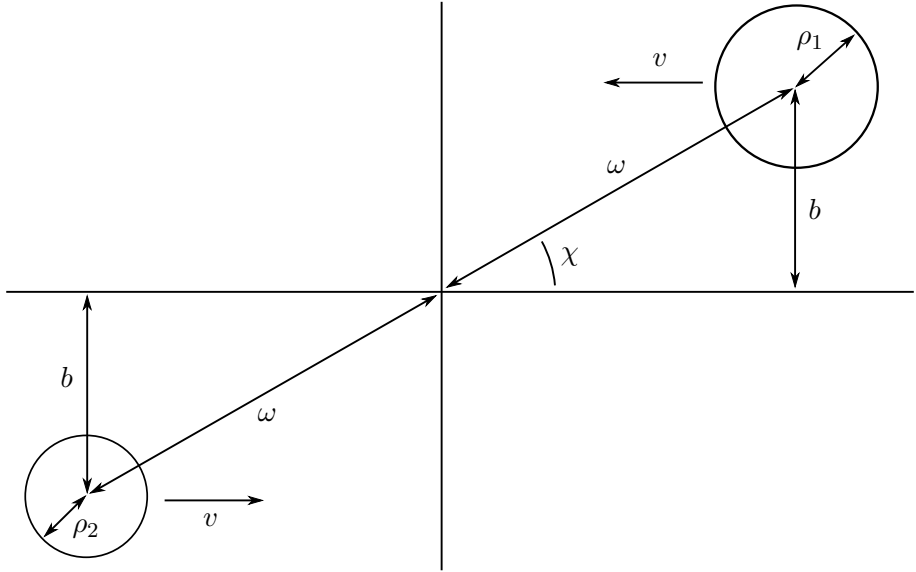

\centering
\include{figures/parameters}
\caption{The physical interpretation of the parameters in the initial conditions of our scattering processes.}
\label{fig:parameters}
\end{figure}

For simplicity we will begin by exploring the geodesic submanifold where the instantons have an identical size and their relative gauge is fixed and orthogonal, $\phi = \tfrac{\pi}{2}$. Figure \ref{fig:instantons scattering with impact parameter} shows the scattering of two instantons on this submanifold with an impact parameter of $b = 0.5$.  The figure shows many snapshots of the instanton's position and sizes at discrete intervals in the evolution. The solid line traces the instantons' positions at $\pm \tau$. The centre of each circle corresponds to the instanton's position and the radius shows the values of, $\rho_1$ and $\rho_2$. These can be interpreted as the instanton sizes although care needs to be taken with this interpretation when the lumps are close together.

After the interaction in Figure \ref{fig:instantons scattering with impact parameter}, the instanton sizes are perturbed and they begin shrink. To the limits of our numerical accuracy the instantons appear to pass through the zero size singularity and emerge with an increasing size, spreading out indefinitely. It may seem concerning that the instantons pass through the singularity on the moduli space, but this is not a generic behaviour. If we move away from the submanifold and give the instantons an initial difference in size or relative gauge angle then they will no longer hit the singularity. The value of $v_1$ and $v_2$ will no longer pass through the origin and the instantons' minimum sizes will be greater than zero.

\begin{figure}
\centering
\includegraphics[width=0.7\textwidth]{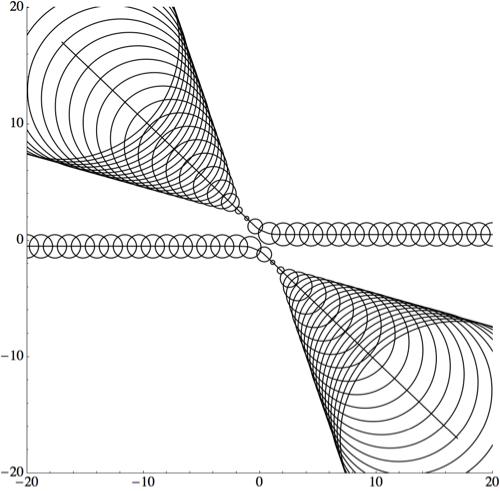}
\caption{The collision of two instantons with an impact parameter of $b = 0.5$. The relative gauge angle begins and remains fixed at $\phi = \tfrac{\pi}{2}$.}
\label{fig:instantons scattering with impact parameter}
\end{figure}

As the impact parameter decreases towards zero the scattering angle increases towards $90^\circ$. Figure \ref{fig:instantons scattering in head on collision} shows the scattering of two instantons with a small impact parameter of $b = 0.01$ where the scattering is at almost exactly $90^\circ$. As mentioned previously, we cannot numerically integrate a direct head on collision due to the discontinuous jump between equivalent parameterisations when the instantons become coincident. This is shown more clearly in Figure \ref{fig:evolution of separation parameters} where the evolution of $|\tau|$ and $|\sigma|$ is shown for impact parameters of $b = 0.1$ and $b = 0.01$. The interpolation between the two becomes increasingly quick as the impact parameter is reduced. The angle $\chi$ also jumps by $\tfrac{\pi}{2}$. This jump can also be seen near the origin in Figure \ref{fig:instantons scattering in head on collision}.

\begin{figure}
\centering
\includegraphics[width=0.7\textwidth]{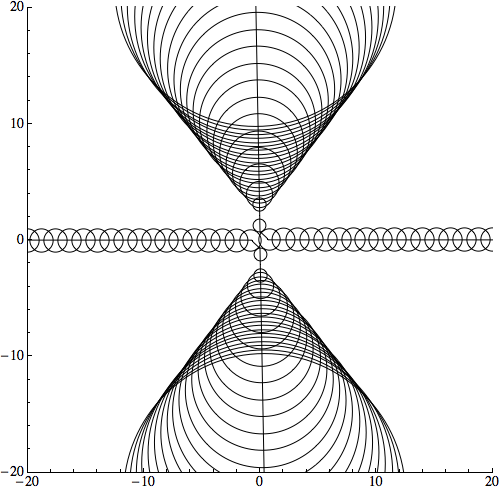}
\caption{The collision of two instantons with an impact parameter of $b = 0.01$. The relative gauge angle begins and remains fixed at $\phi = \tfrac{\pi}{2}$.}
\label{fig:instantons scattering in head on collision}
\end{figure}

\begin{figure}
  \centering
  \begin{subfigure}[b]{0.7\textwidth}
    \includegraphics[width=\textwidth]{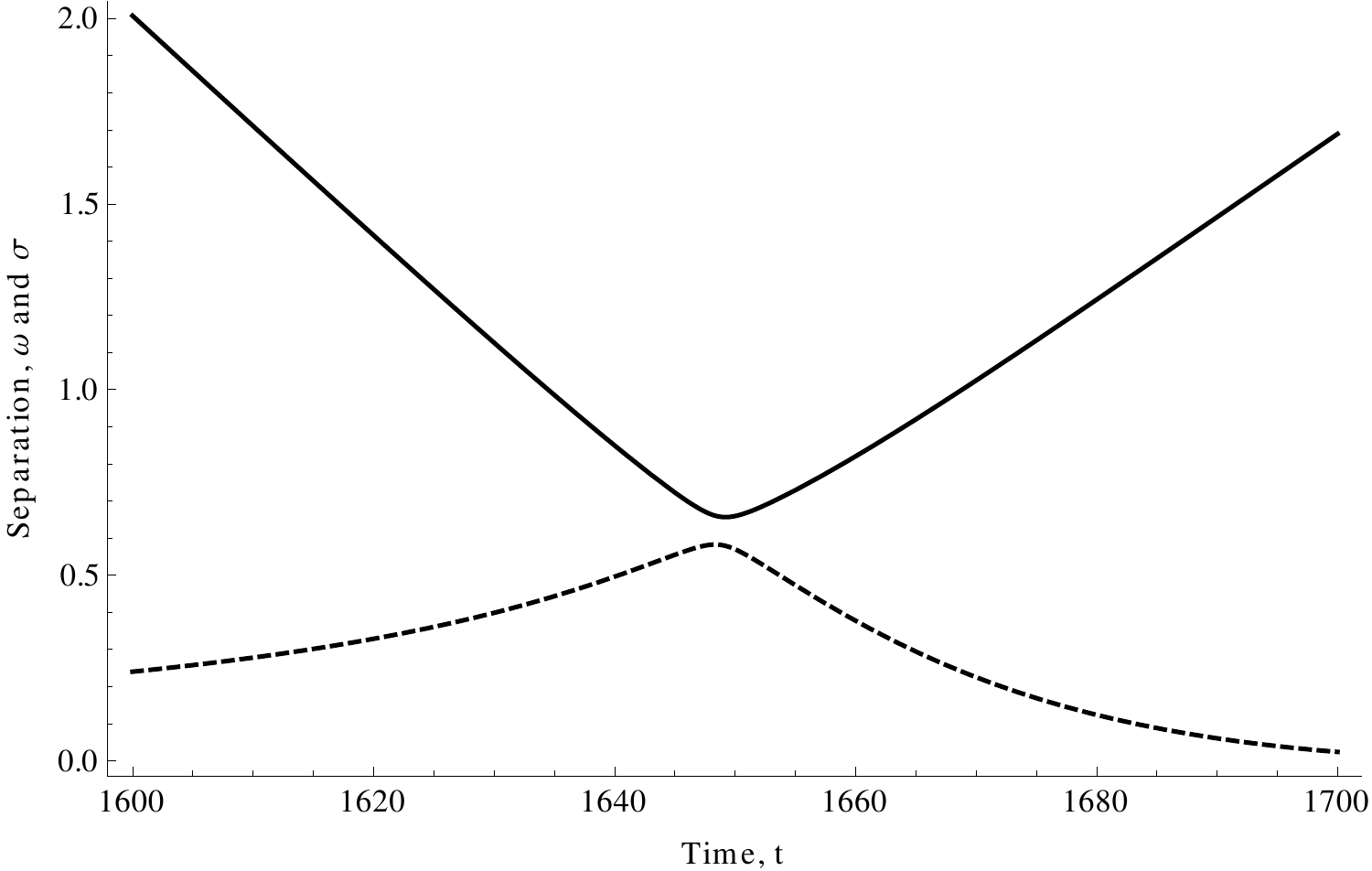}
    \subcaption{Impact parameter $b = 0.1$}
  \end{subfigure} \\[10pt]
  \begin{subfigure}[b]{0.7\textwidth}
    \includegraphics[width=\textwidth]{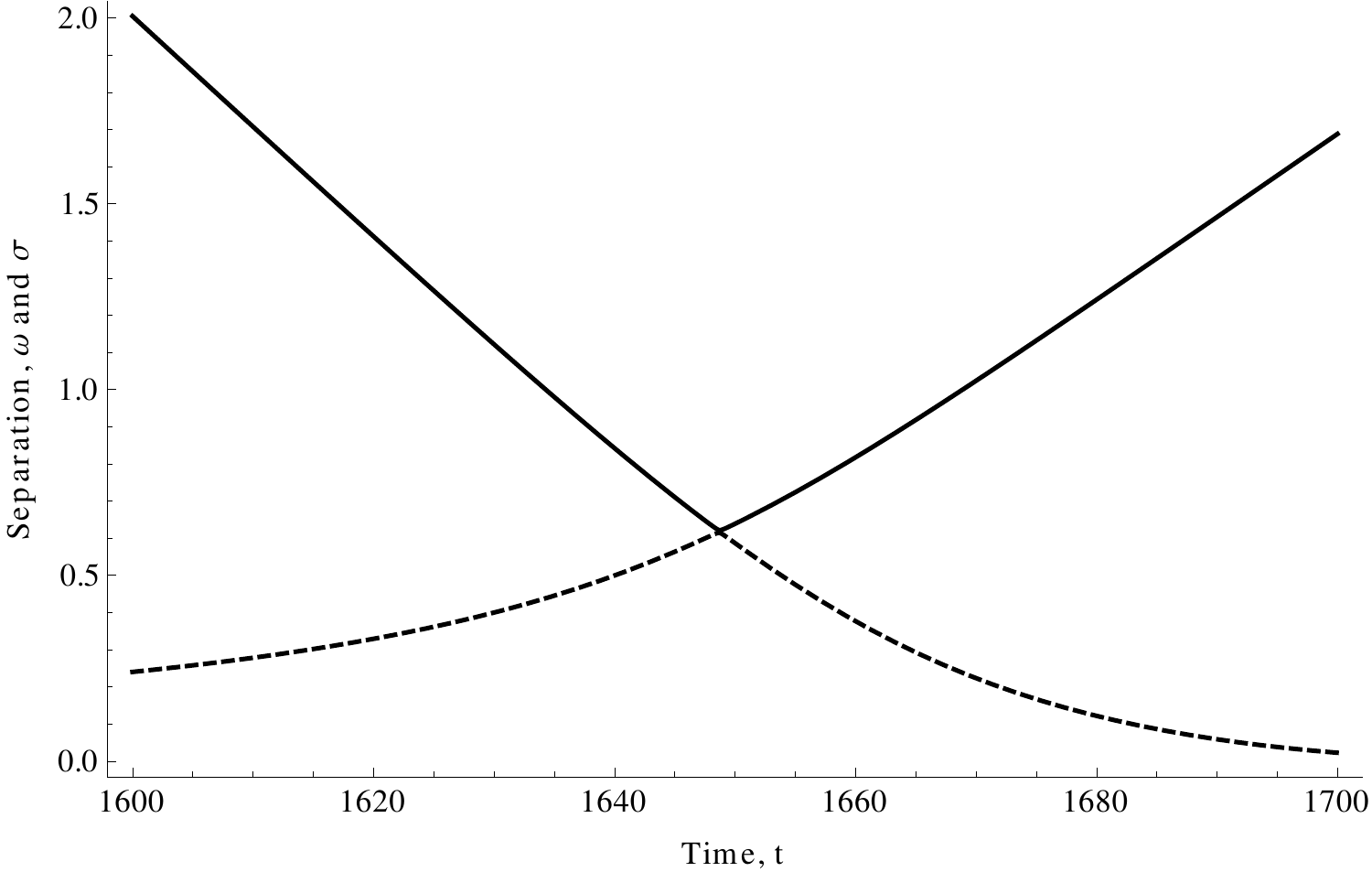}
    \subcaption{Impact parameter $b = 0.01$}
  \end{subfigure} 
  \caption{The evolution of $|\tau|$ (solid) and $|\sigma|$ (dashed) during two collisions of instantons with different impact parameters.}
  \label{fig:evolution of separation parameters}
\end{figure}

Figure \ref{fig:scattering angle vs impact parameter for instantons} shows how the scattering angle varies with impact parameter. The scattering angle clearly tends towards $90^\circ$ as the impact parameter goes to zero. The scattering angle decreases to zero asymptotically as the impact parameter increases.

\begin{figure}
\centering
\includegraphics[width=0.8\textwidth]{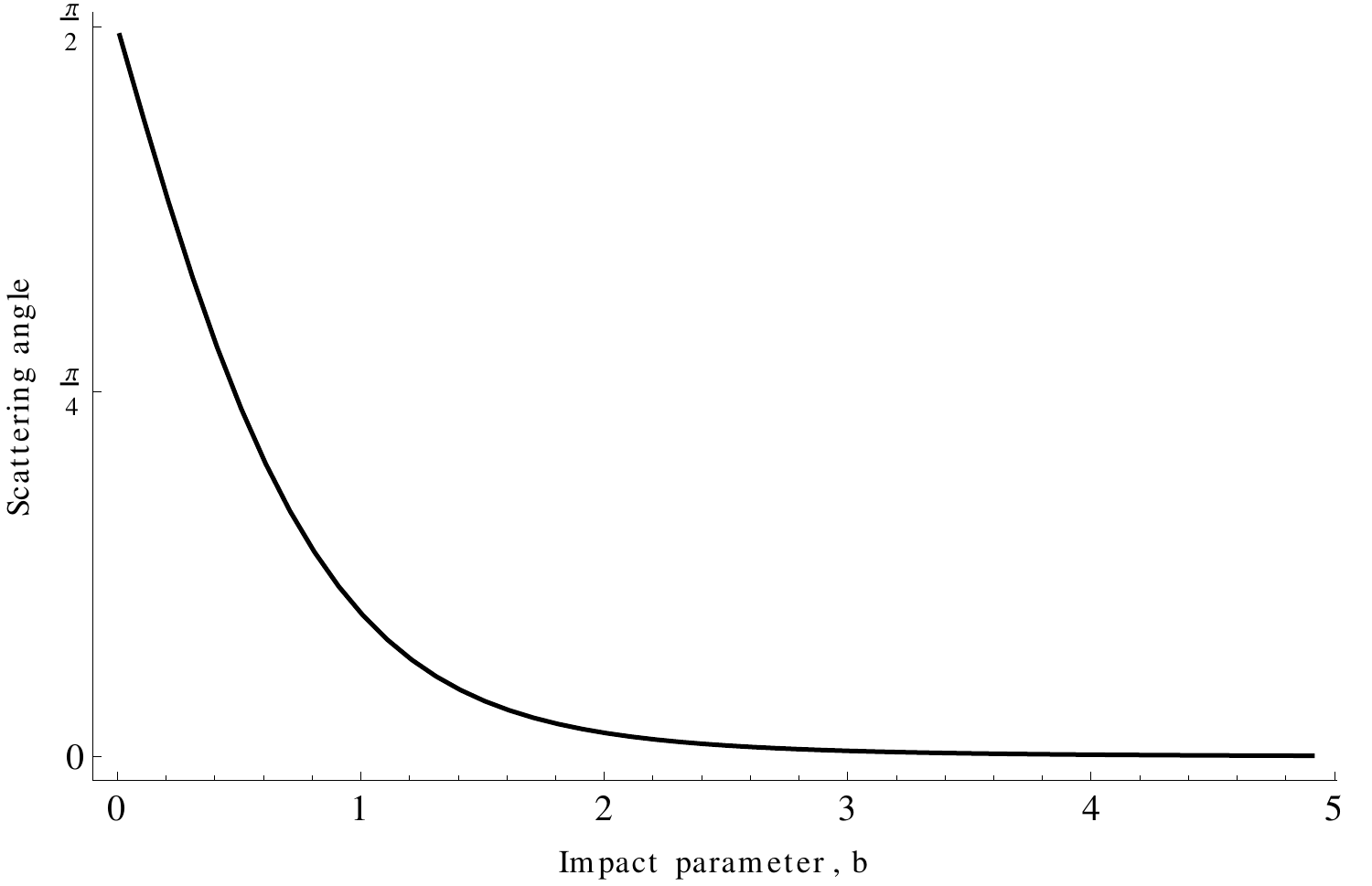}
\caption{The variation of the scattering angle of a collision of two instantons with different impact parameters, $b$.}
\label{fig:scattering angle vs impact parameter for instantons}
\end{figure}

So far we have only considered a subset of the possible initial conditions where the gauge alignment of the two dyonic instantons is orthogonal. We can lift that restriction to explore the effect of this angle however we will still remain in the complex submanifold where the instantons lie and move in a single plane.

We can see from the symmetry in equation \eqref{eq:equivalent ADHM parameters} that the instantons will emerge with sizes,
\begin{equation}
\frac{1}{\sqrt{2}} | v_1 \pm v_2 |.
\end{equation}
The outgoing sizes are only equal when the incoming $v_1$ and $v_2$ are orthogonal, or $\phi = \tfrac{\pi}{2}$. In general they will emerge with different sizes. This description is accurate for immediately before and after the right-angled scattering, however the relation between the asymptotic sizes of the incoming and outgoing instantons is not as clear due to the additional dynamical effects on the size. Figure \ref{fig:instanton scattering with different outgoing sizes} shows the result of a collision with an initial relative gauge angle of $\phi = \tfrac{\pi}{4}$. We see that the instantons now emerge with a different behaviour in their sizes.  The scattering angle is also shallower than when the gauge alignment is orthogonal.

\begin{figure}
\centering
\includegraphics[width=0.7\textwidth]{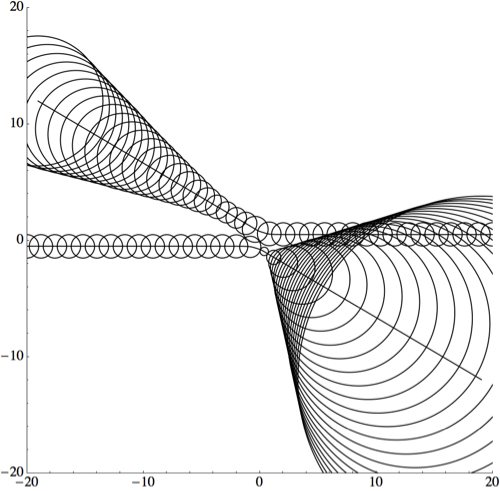}
\caption{The collision of two dyonic instantons with an impact parameter of $b = 0.5$ and an initial relative gauge angle of $\phi = \tfrac{\pi}{4}$.}
\label{fig:instanton scattering with different outgoing sizes}
\end{figure}

Recall that there is another geodesic submanifold corresponding to instantons with parallel gauge angles, $\phi = 0$. In this case the instantons do not interact at all and the metric is flat. The scattering angle is therefore trivially zero. The relative gauge angle between the two instantons therefore gives some measure of the strength of the interaction between the instantons. Figure \ref{fig:scattering angle vs gauge angle for instantons} shows how the scattering angle depends on the initial difference in gauge angle between the two instanton lumps. The strength of the interaction between the instanton lumps depends on the difference in their gauge alignment with the strongest interaction occurring when they are orthogonal. At the other extreme when the gauge alignment is parallel the instantons are completely non-interacting.

\begin{figure}
\centering
\includegraphics[width=0.8\textwidth]{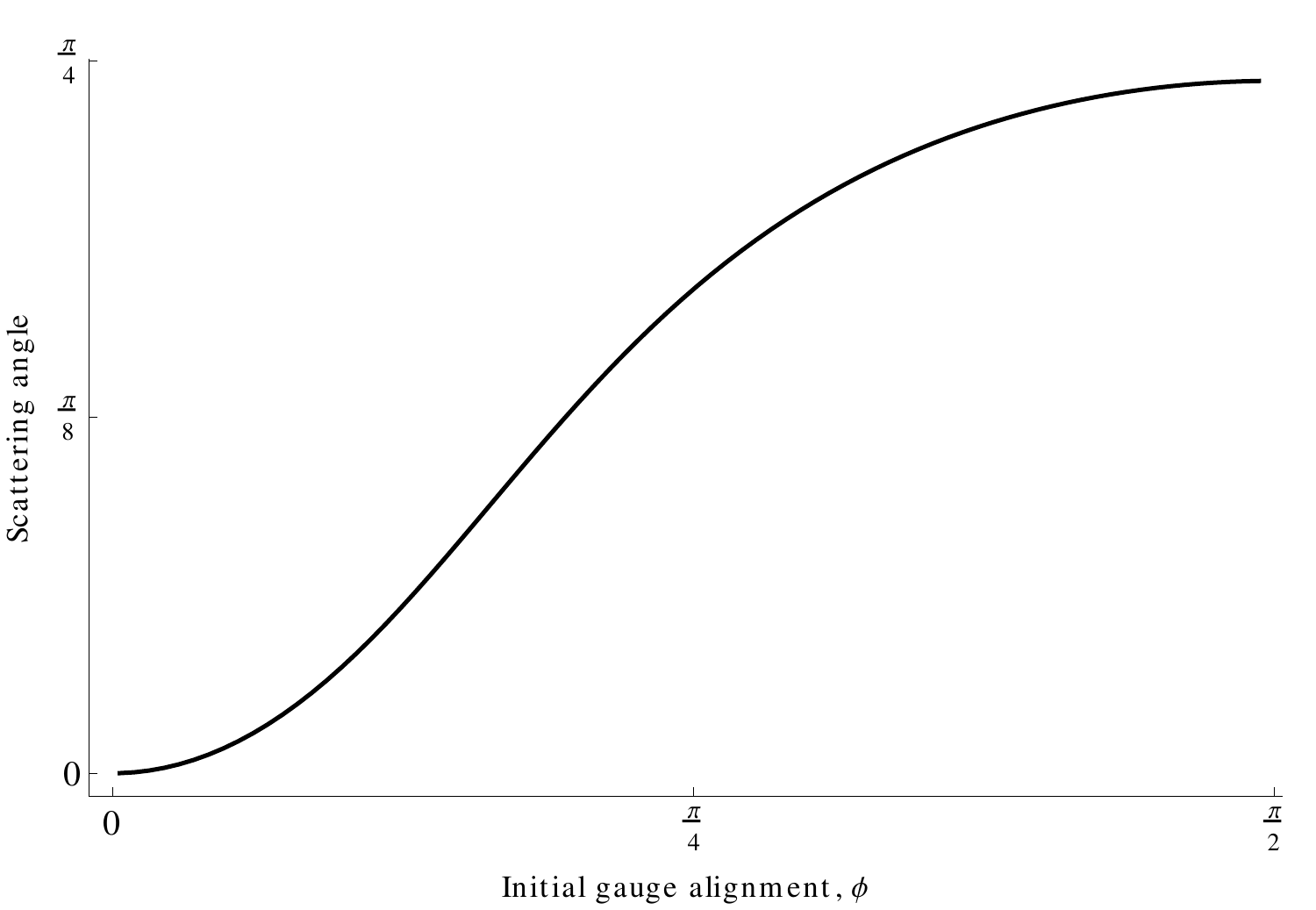}
\caption{The variation of the scattering angle of a collision of two instantons with different initial gauge alignments, $\phi$, and impact parameter $b = -0.5$.}
\label{fig:scattering angle vs gauge angle for instantons}
\end{figure}

Right-angled scattering is a generic feature of two instantons which collide head on, yet we have seen that when the relative gauge angle is zero the instantons do not interact at all. To reconcile the limit of zero gauge angle with right-angled scattering, we note that for small $\phi$ one of the instantons emerges with a much larger size than the other. When $\phi$ is sufficiently small, the large instanton grows in size faster than the instantons separate and so causes them to interact again. This can be seen in Figure \ref{fig:limit of small phi 1} where the initial gauge alignment is $\phi = 0.5$. The instantons scatter at right angles but then continue to interact and rotate slightly for an asymptotic scattering angle of less than $90^\circ$. As the initial gauge angle, $\phi$, goes to zero this effect becomes more pronounced and the asymptotic scattering angle goes to zero, despite the instantons initially scattering at $90^\circ$ when they become coincident. Figures \ref{fig:limit of small phi 2} and \ref{fig:limit of small phi 3} show this effect for $\phi = 0.45$ and $\phi = 0.4$. The  limit of this behaviour is that when $\phi = 0$ the right angled scattering is not apparent and the instantons simply pass through each other after becoming coincident as expected by the moduli space metric.

\begin{figure}
  \centering
  \begin{subfigure}[b]{0.32\textwidth}
    \includegraphics[width=\textwidth]{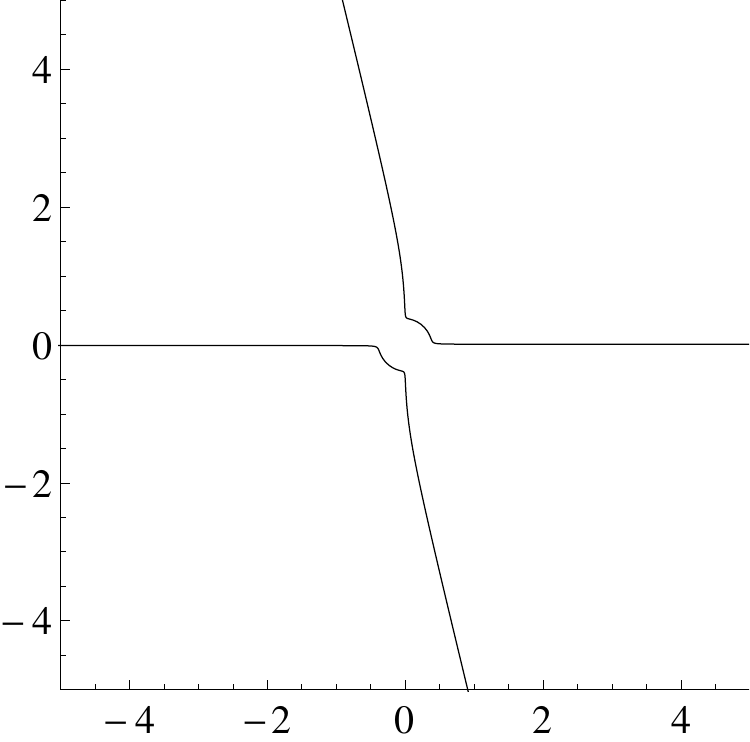}
    \subcaption{Initial gauge angle $\phi = 0.5$}
    \label{fig:limit of small phi 1}
  \end{subfigure} \hfill
  \begin{subfigure}[b]{0.32\textwidth}
    \includegraphics[width=\textwidth]{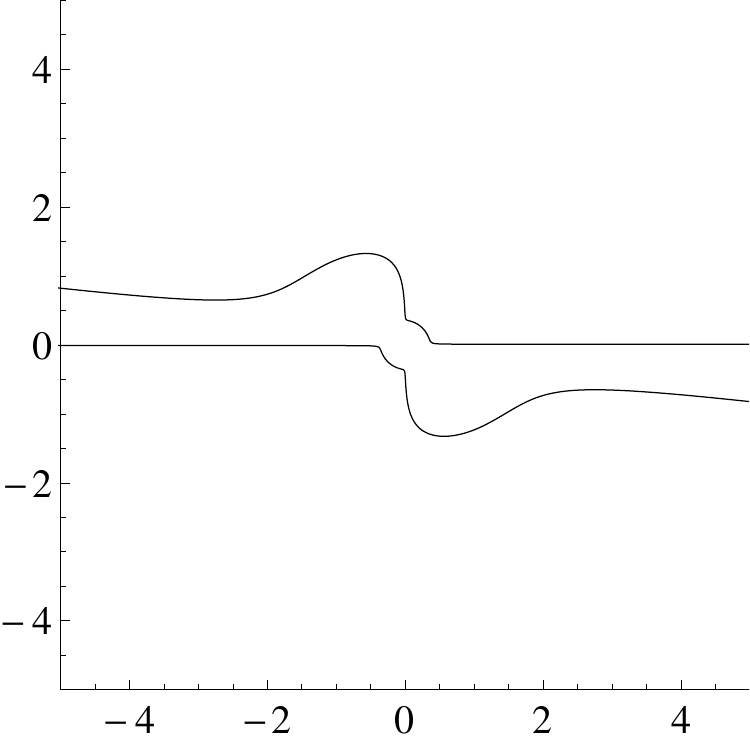}
    \subcaption{Initial gauge angle $\phi = 0.45$}
    \label{fig:limit of small phi 2}
  \end{subfigure} \hfill
  \begin{subfigure}[b]{0.32\textwidth}
    \includegraphics[width=\textwidth]{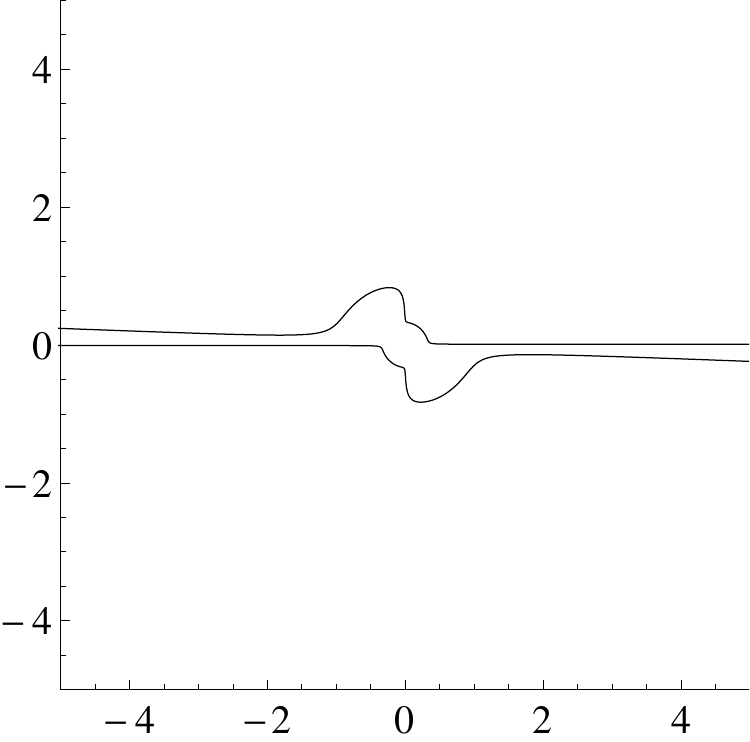}
    \subcaption{Initial gauge angle $\phi = 0.4$}
    \label{fig:limit of small phi 3}
  \end{subfigure}
  \caption{The evolutions of the centres of the instantons in a nearly head on collision ($b = 0.01$) for various values of $\phi$.}
\end{figure}

In the full moduli space, the instantons each have an $\SU(2)$ gauge angle and are free to move in $4$ dimensions. As described previously the scattering direction no longer remains in a plane but depends on the relative angle between the separation of the instantons and the relative gauge alignment of the instantons in $\SU(2)$. Moving beyond the complex subspace into the full moduli space becomes computationally expensive but we have been able to explore a few examples. The complex subspace appears to be stable to small perturbations in the full moduli space so that the discussion above can be safely interpreted in the full moduli space. It would be interesting to explore the scattering behaviour of instantons with their gauge alignment in the full $\SU(2)$ gauge group and not just constrained to the unbroken $\U(1)$. A systematic study of such behaviour is unfortunately beyond our reach at this time.

\section{Dyonic instanton scattering}

The presence of a potential in the effective action for dyonic instantons has a significant effect on their scattering behaviour. In this section we will explore how dyonic instantons behave during head on collisions and with a finite impact parameter. The right angled scattering behaviour of instantons is replaced with a more complex dependence on the potential.

The parameters describing dyonic instantons are identical to those used to describe instantons, as in Figure \ref{fig:parameters}. The only difference is the presence of a potential term in the equations describing their evolution. Figure \ref{fig:head on collision} shows a head on collision between two dyonic instantons. The instantons begin their evolution by moving towards each other along the real axis but they are deflected as they approach either other. The instantons scatter at an unusual angle of just over $122^\circ$ and the radial size of the instantons picks up a small stable oscillation.

\begin{figure}
\centering
\includegraphics[width=0.7\textwidth]{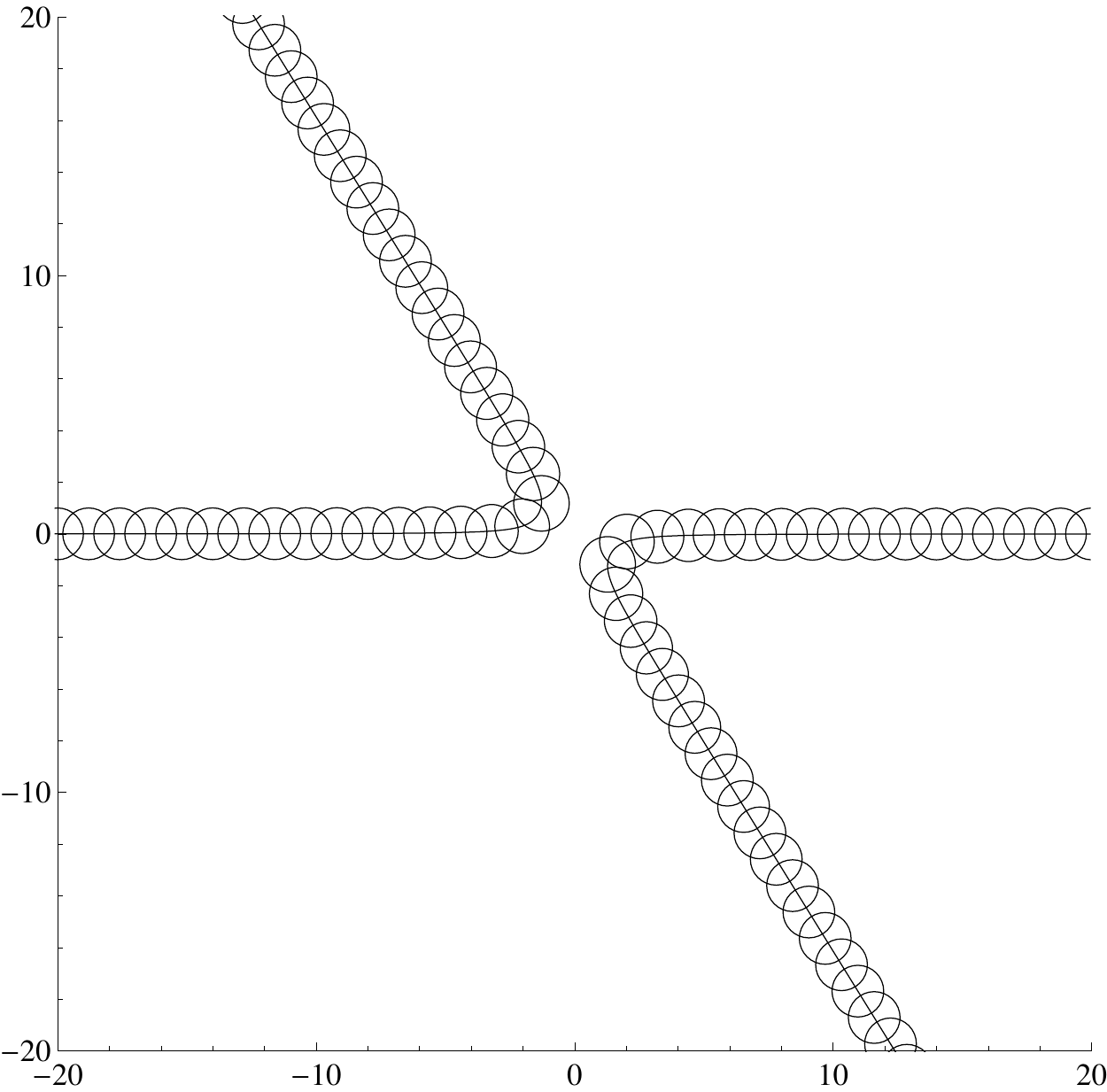}
\caption{A head on collision of two dyonic instantons with the magnitude of the potential at $|q| = 0.1$.}
\label{fig:head on collision}
\end{figure}

Figure \ref{fig:scattering angle vs potential scale} shows the relation between the scattering angle and the magnitude of the potential for dyonic instantons, $q$. As expected, the scattering angle approaches $90^\circ$ as $q$ goes to zero and the system gets closer to describing pure instantons.

\begin{figure}
\centering
\includegraphics[width=0.8\textwidth]{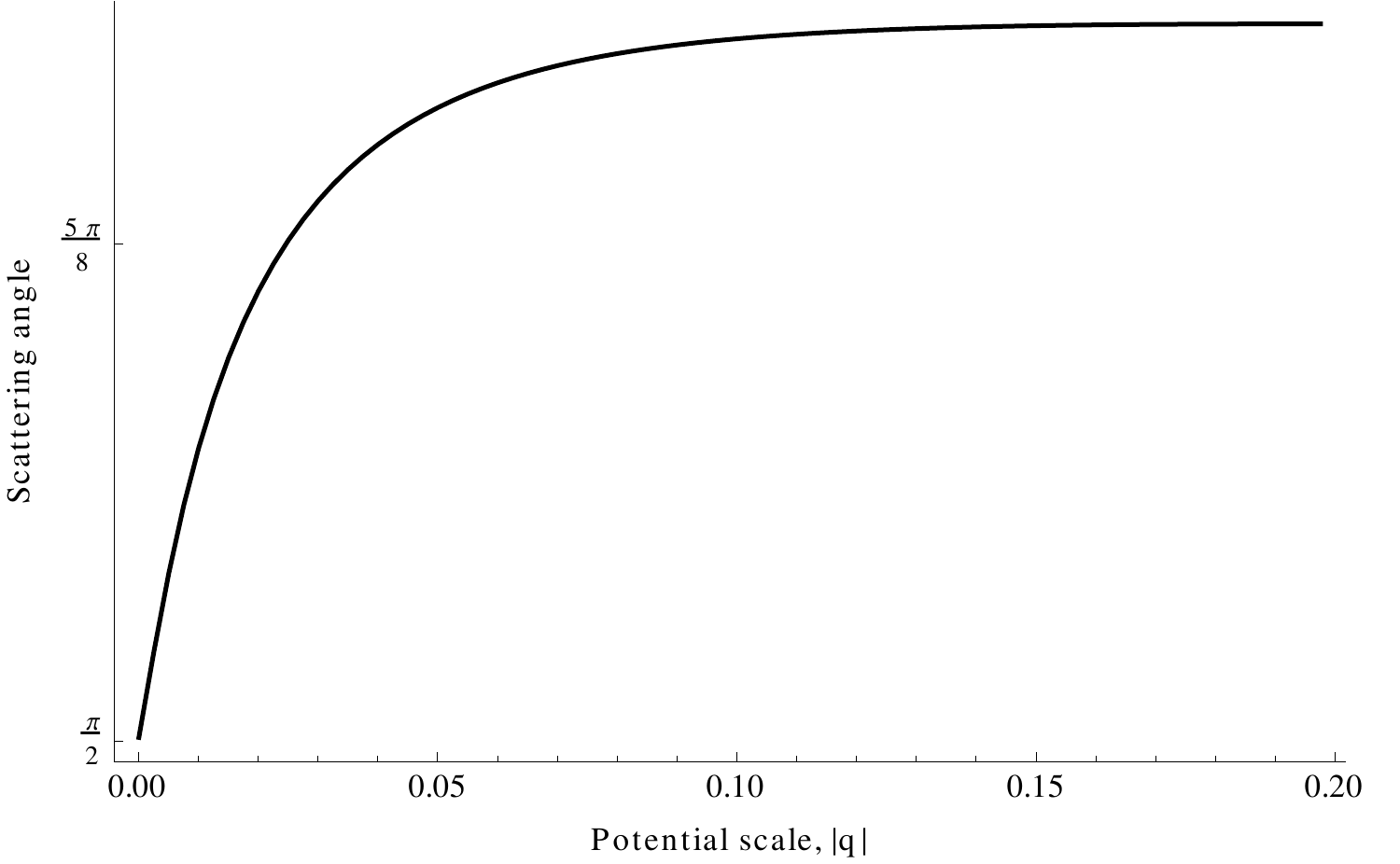}
\caption{The scattering angle of a head on collision of two dyonic instantons with varying values of the potential scale, $|q|$.}
\label{fig:scattering angle vs potential scale}
\end{figure}

When the impact parameter, $b$, is non-zero the dyonic instantons also display a range of interesting behaviour. From the view point of one of the incoming instanton lumps they scatter to their left in a head on collision. If we move their impact parameter in this direction so that $b$ is negative then the instantons continue to repel each other but their scattering angle becomes shallower. Figure \ref{fig:negative impact parameter scattering} shows the scattering of two dyonic instantons with impact parameter $b = -2$ and we see that the scattering angle is much shallower than in the head on collision. Figure \ref{fig:scattering angle vs negative impact parameter for dyonic instantons} shows how the scattering angle depends upon the impact parameter in the negative direction. As the impact parameter is increased in the negative direction the strength of the instanton's interaction decreases and the scattering angle goes to zero. Comparing this to pure instantons in Figure \ref{fig:scattering angle vs impact parameter for instantons} we see that the interaction remains stronger at large impact parameter in the presence of a potential.

\begin{figure}
\centering
\includegraphics[width=0.7\textwidth]{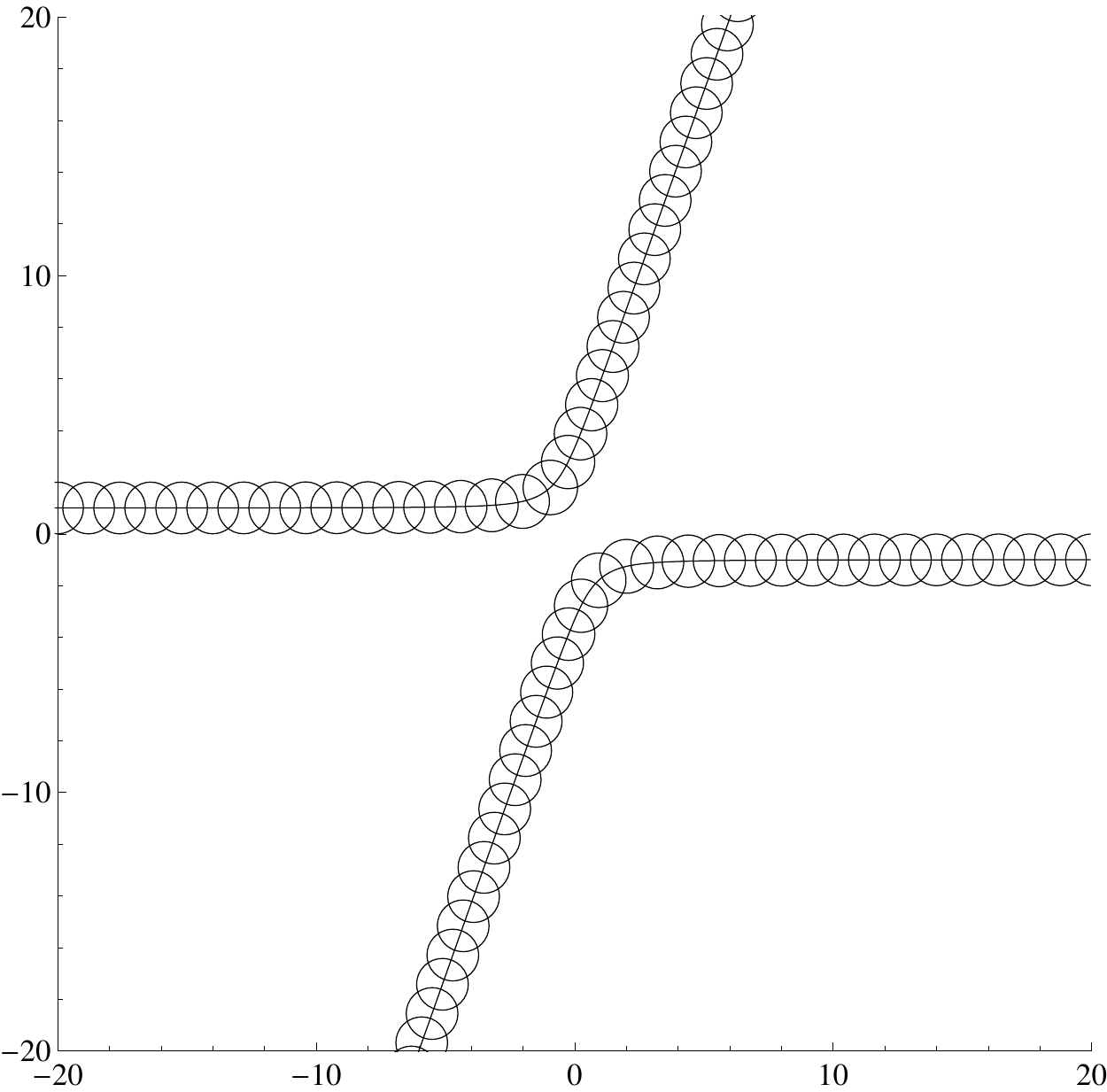}
\caption{The collision of two dyonic instantons with an impact parameter of $b = -1$.}
\label{fig:negative impact parameter scattering}
\end{figure}

\begin{figure}
\centering
\includegraphics[width=0.8\textwidth]{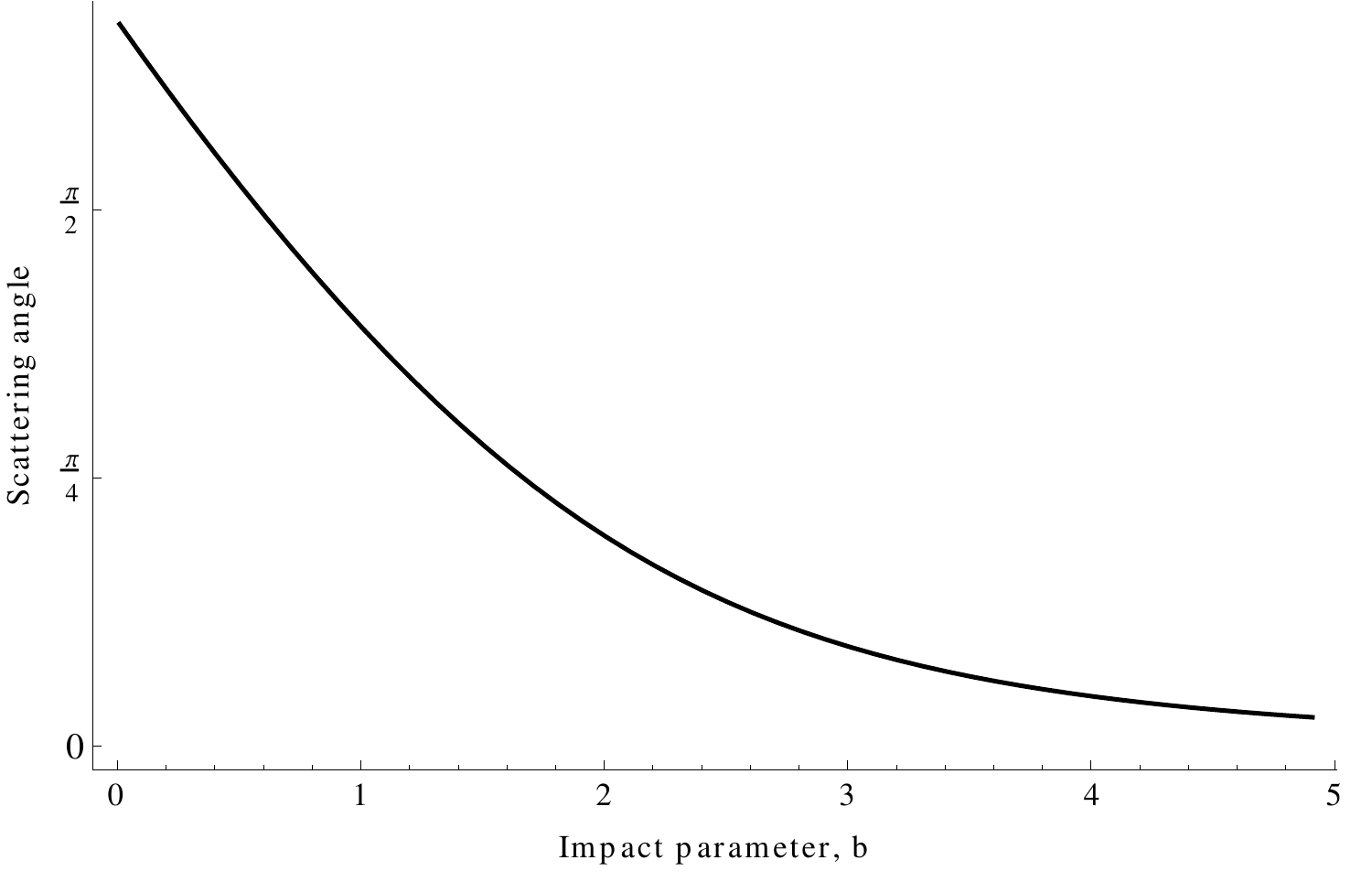}
\caption{The variation of the scattering angle with impact parameter for a collision between two dyonic instantons. The impact parameter, $b$, is in the negative direction.}
\label{fig:scattering angle vs negative impact parameter for dyonic instantons}
\end{figure}

When the scattering angle is positive the behaviour is more interesting. The dyonic instantons are now attracted to each other and it is possible for the instantons to loop around each other before scattering. Figure \ref{fig:positive impact parameter scattering} shows one example of this in detail with an impact parameter of $b = 2.9$. Figure \ref{fig:scattering angle vs positive impact parameter for dyonic instantons} shows how the outgoing angle varies with the impact parameter across a range of different scales of the potential. The jumps in the plots correspond to the instantons losing their identity in the scattering process. This happens whenever the instantons come close to the origin at the same time and form a single symmetrical lump. It becomes meaningless to talk about which outgoing instanton corresponds to which incoming instanton and the jumps by $180^\circ$ are from swapping which parameters are used to label each instanton rather than a physical discontinuity. The tall spikes correspond to scatterings in which the instantons orbit for more than one revolution and so can have an outgoing angle of greater than $360^\circ$.

\begin{figure}
\centering
\includegraphics[width=0.7\textwidth]{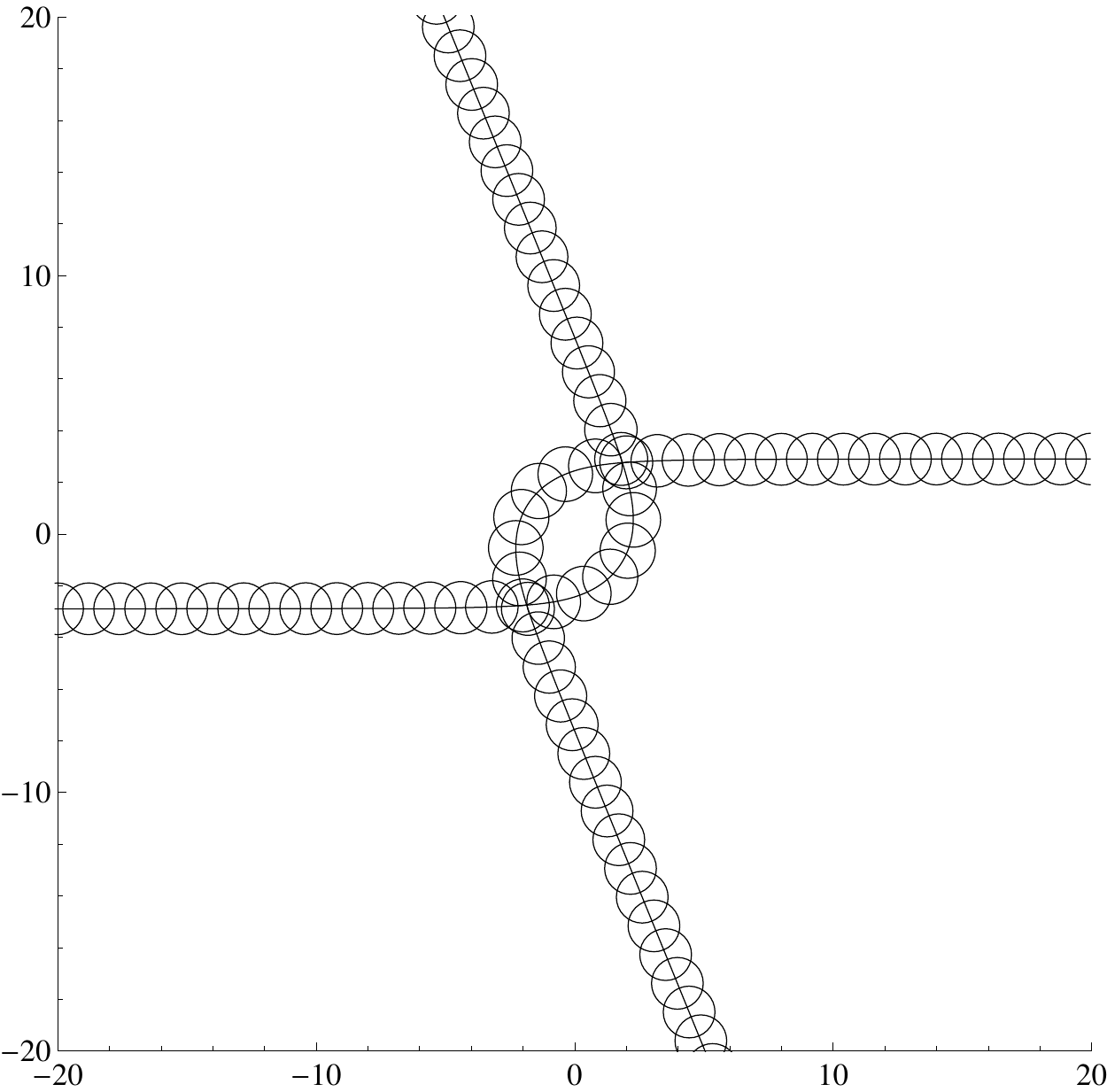}
\caption{A collision of two instantons with an impact parameter of $b = 2.9$.}
\label{fig:positive impact parameter scattering}
\end{figure}

\begin{figure}
\centering
\includegraphics[width=0.9\textwidth]{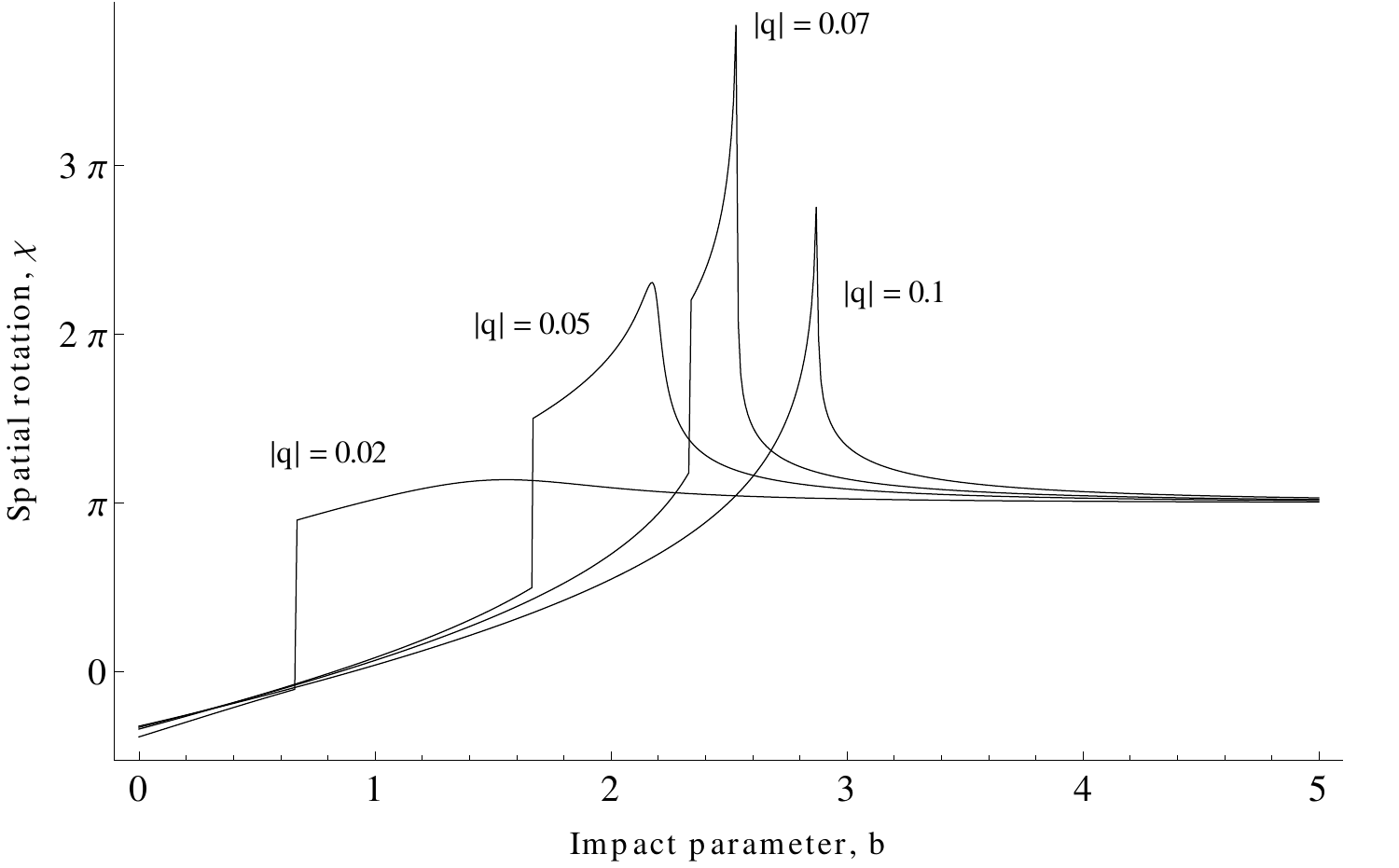}
\caption{The variation of the outgoing angle, $\chi$, with impact parameter for a collision between two dyonic instantons, shown for different values of the potential scale, $|q|$. The impact parameter, $b$, is in the positive direction.}
\label{fig:scattering angle vs positive impact parameter for dyonic instantons}
\end{figure}

Figure \ref{fig:scattering angle vs gauge angle} shows how the scattering angle depends on the initial difference in gauge angle between the two dyonic instantons. As with instantons, the scattering angle interpolates between zero when the gauge angles are equal and the value seen previously when the gauge angles are orthogonal. The strength of the interaction between the dyonic instantons again depends on the difference in their gauge angles with the strongest interaction occurring when they are orthogonal. At the other extreme when the gauge alignment is parallel the dyonic instantons are completely non-interacting.

\begin{figure}
\centering
\includegraphics[width=0.8\textwidth]{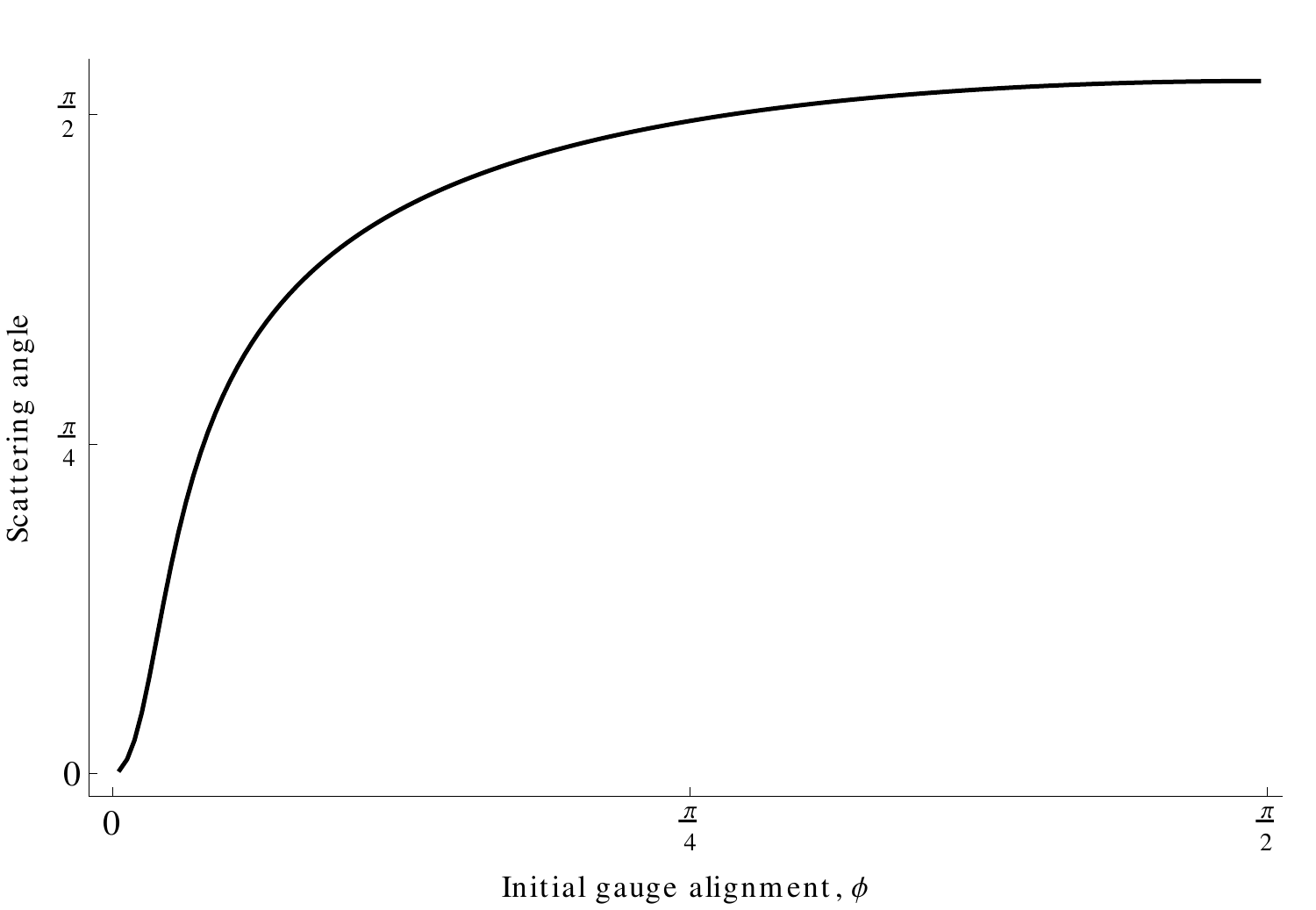}
\caption{The variation of the scattering angle with the initial gauge alignment, $\phi$ for two dyonic instantons with an impact parameter of $b = -0.5$.}
\label{fig:scattering angle vs gauge angle}
\end{figure}

The properties of dyonic instantons that we have considered so far are reminiscent of the Q-lumps considered by Leese \cite{Leese1991283}: both systems have a topological charge and a non-topological Noether charge; the presence of the non-topological charge induces a potential in the effective action for slow moving solitons and the potential stabilises the solitons against spreading out indefinitely under a small perturbation. Both systems also see similar scattering behaviour with head on collisions causing a deflection of more than $90^\circ$ before the lumps become coincident. As far as we are aware dyonic instantons and Q-lumps are the only solitons which have been seen to scatter in this way. Both system also have trajectories where the lumps orbit each other briefly when the impact parameter is in an appropriate range. Leese makes the point that it is difficult to avoid some external perturbation which would introduce a potential and so Q-lumps may be more physically relevant than the underlying pure $\sigma$-model soliton. This seems to be particularly relevant for instantons on D$4$-branes where the potential is induced by a separation of the branes.

\section{Geodesic completeness of the moduli space}

It is straightforward to see that the instanton moduli space is not geodesically complete, but the equivalent question for motion in the presence of a potential is not so straight forward. For pure instantons, a small negative perturbation in the size parameter will cause the instanton to shrink steadily until it hits the zero size singularity. For dyonic instantons however, there is a non-zero conserved angular momentum on the moduli space from the rotation in the unbroken $\U(1)$ gauge group. The angular momentum for a single (dyonic) instanton is simply
\begin{equation}
L = \rho^2 \dot \theta.
\end{equation}
This angular momentum protects a single dyonic instanton from shrinking to zero size under small perturbations.

For two dyonic instantons however the angular momentum is more complicated and the picture is not as clear.
On the two instanton moduli space the conserved gauge angular momentum arises from Killing direction $\Theta$ in the metric and is given by
\begin{equation}
L = g_{\Theta p} \dot z^p,
\end{equation}
where $\Theta$ is the embedding angle in the unbroken $\U(1)$ as in equation \eqref{eq:polar coordinate metric}. Considering just the complex subspace the angular momentum for two dyonic instantons is
\begin{equation}
L = \rho_1^2 \dot \theta_1 + \rho_2^2 \dot \theta_2 - \frac{2}{N_A} \rho_1 \rho_2 \cos \phi \sin \phi\, ( \rho_1 \dot \rho_2 - \rho_2 \dot \rho_1 ) - \frac{2}{N_A} \rho_1^2 \rho_2^2 \sin^2\phi\, (\dot \Theta - 2 \dot \chi).
\end{equation}
Since there is only an overall conserved quantity the individual instantons are free to transfer angular momentum. It is no longer clear \emph{a priori} whether one of the dyonic instanton can shrink to zero size by exchanging angular momentum with the other.

By numerically exploring motion on the moduli space we have been readily able to find trajectories where the instantons do indeed exchange angular momentum in such a way that one instanton shrinks to zero size. This is most easily observed when the instantons are far enough apart to be clearly distinct yet still within range of interaction. An illustrative example is shown in Figure \ref{fig:oscillation to zero size} where both instantons start with a non-zero angular momentum but one draws angular momentum from the other until it passes through zero size. Both instantons continue to oscillate at a steady rate and so long as the instanton reaches the lowest point of its oscillation at the same time as passing through zero angular momentum it will hit the zero size singularity. This requires fine tuning of one of the parameters which we were able to achieve to the limits of our numerical accuracy. This fine tuning suggests there is a subset of initial conditions of codimension one which will evolve to hit a zero size singularity.

\begin{figure}
  \centering
  \begin{subfigure}[b]{0.47\textwidth}
    \centering
    \includegraphics[width=\textwidth]{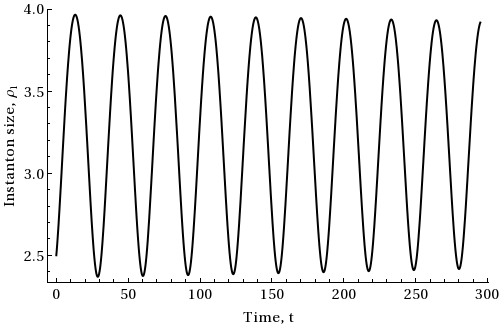}
    \caption{The size of the first instanton, $\rho_1$. This is drawing angular momentum away from the second.}
  \end{subfigure}
  \hfill
  \begin{subfigure}[b]{0.47\textwidth}
    \centering
    \includegraphics[width=\textwidth]{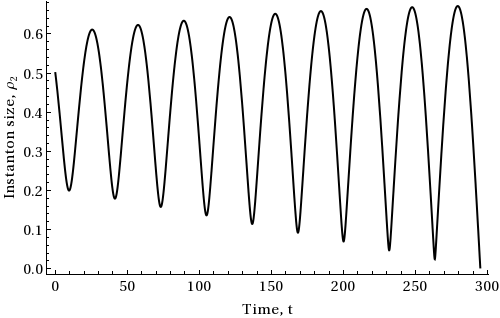}
    \caption{The size of the second instanton, $\rho_2$. This passes through zero size.}
  \end{subfigure}\\ \vspace{0.2in}
  \begin{subfigure}[b]{0.47\textwidth}
    \centering
    \includegraphics[width=\textwidth]{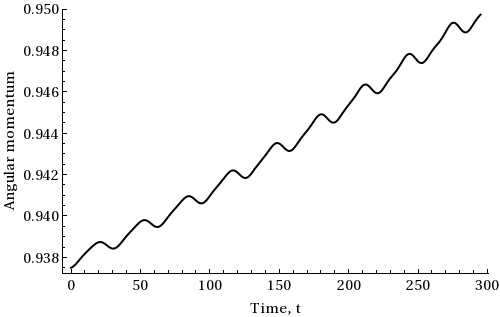}
    \caption{The angular momentum of the first instanton.}
  \end{subfigure}
  \hfill
  \begin{subfigure}[b]{0.47\textwidth}
    \centering
    \includegraphics[width=\textwidth]{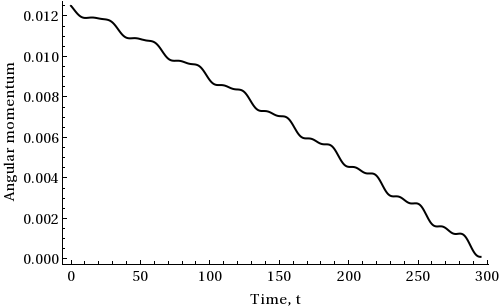}
    \caption{The angular momentum of the second instanton. This passes through zero as the size, $\rho_2$, reaches the lowest point of its oscillation.}
  \end{subfigure}
  \caption{The evolution to a zero size singularity of two initially non-singular dyonic instantons. The magnitude of the scalar VEV was $|q| = 0.1$ and the initial values were $\rho_1 = 2.5, \rho_2 = 0.5, \omega = 15, \phi = -\frac{\pi}{10}$. The initial velocities were $\dot \rho_1 = 0.1, \dot \Theta = 0.2, \dot \phi = 0.1$ and $\dot \rho_2 = -0.03$. All other initial velocities were zero.}
  \label{fig:oscillation to zero size}
\end{figure}

If we consider the full moduli space rather than just motion on the complex geodesic submanifold then we observe that the same generic behaviour is possible. The initial parameters now need a further two parameters to be fine tuned so that the additional two components of $v_1$ or $v_2$ also pass through their minimum value as the angular momentum passes through zero.

\section{Localised charge two instantons}

In this section we will consider the charge two object formed by two coincident (dyonic) instantons. For pure instantons, this configuration cannot be considered as an individual object as a small perturbation to the instanton positions will cause the two constituent instantons to drift apart until they are well separated again. Dyonic instantons however, are stabilised at a fixed separation by the potential. Figure \ref{fig:orbiting dyonic instantons} shows the result of giving two separated dyonic instantons a small initial velocity away from each other. The dyonic instantons now orbit each other in a spiralling pattern. 

\begin{figure}
\centering
\includegraphics[width=0.8\textwidth]{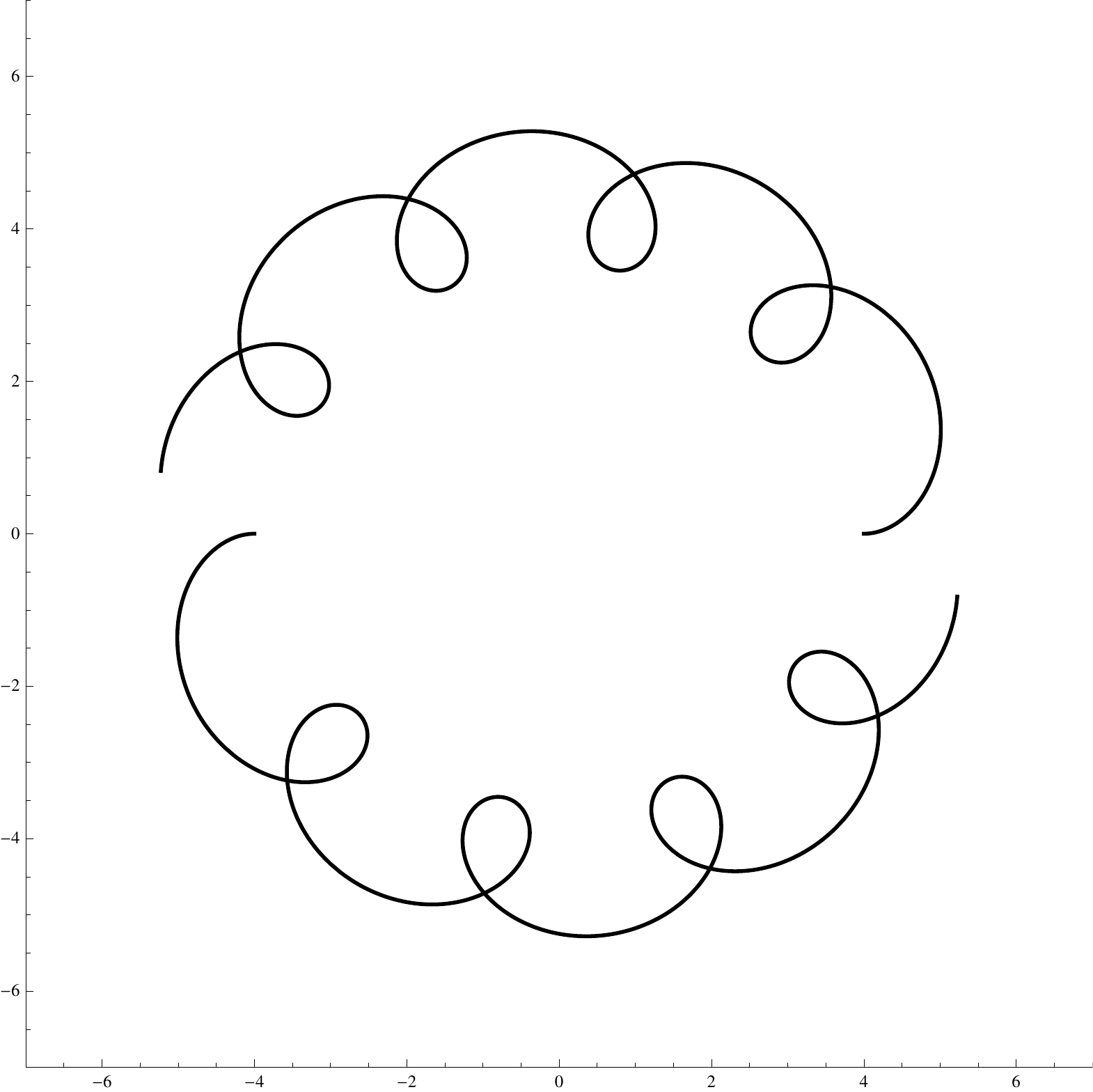}
\caption{Two dyonic instantons in a stable orbit after a small outwards perturbation in their positions. The instantons started at a separation of $x = 4$ with size $\rho = 1$ and a potential scale of $q = 0.1$. They were given an initial outwards velocity of $v = 0.0005$.}
\label{fig:orbiting dyonic instantons}
\end{figure}

The dyonic instantons will only form a stable orbit for a small enough perturbation and will otherwise continue to move away from each other at a steady speed. When only moving slightly faster than this `escape velocity' the dyonic instantons display some orbiting behaviour but do not settle into a stable orbit. Figure \ref{fig:max orbit velocity vs separation} shows how the separation affects the threshold velocity at which the dyonic instantons will no longer form a stable orbit. The velocity decreases as the strength of the interaction between the lumps decreases. The maximum threshold velocity is located close to where the instantons are coincident but with a slight shift towards a finite separation.

\begin{figure}
\centering
\includegraphics[width=0.8\textwidth]{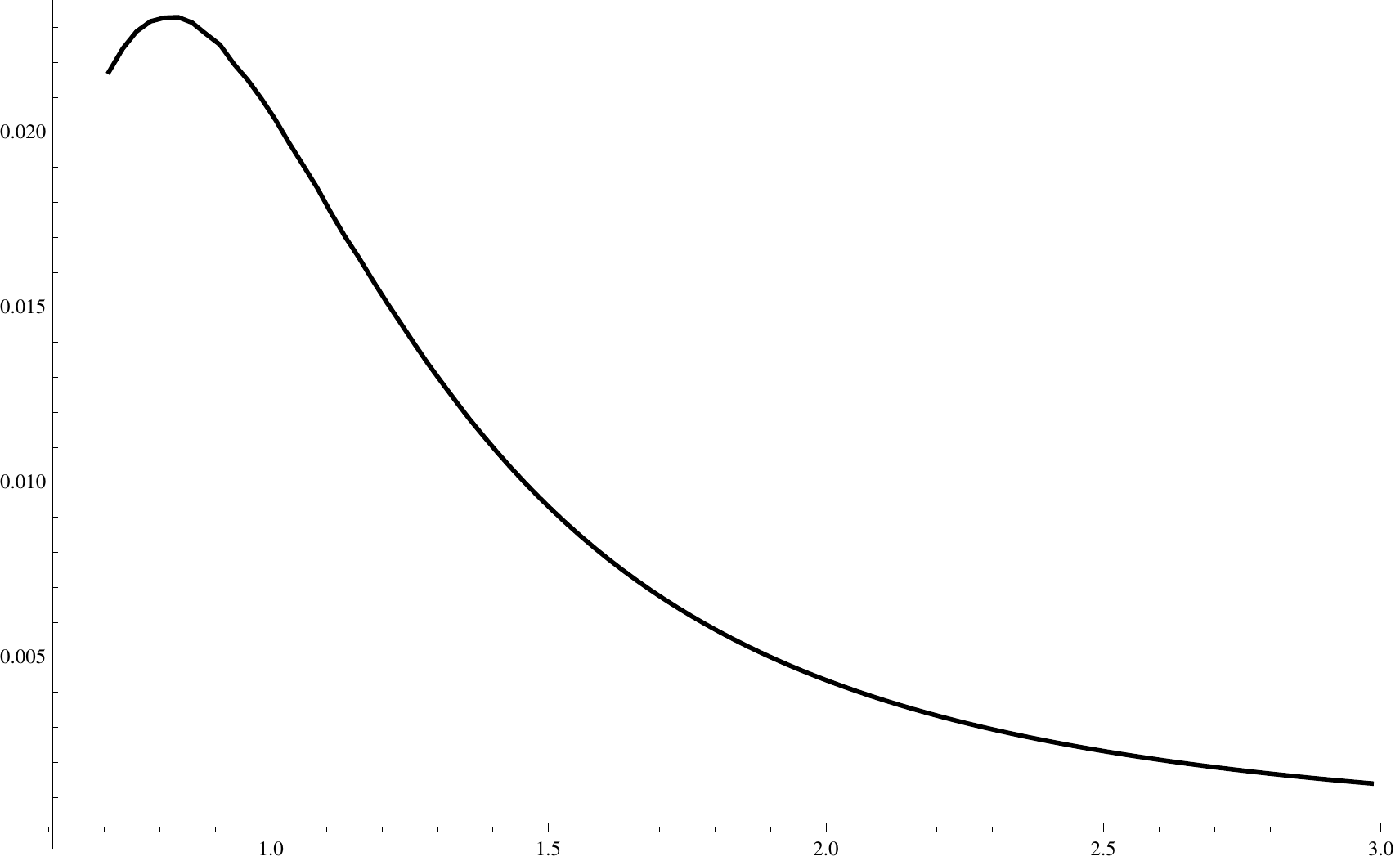}
\caption{The maximum outwards velocity of two dyonic instantons that will lead to a stable orbit with different separations. The instantons have size $\rho = 1$ and a potential scale of $|q| = 0.1$.}
\label{fig:max orbit velocity vs separation}
\end{figure}

As a result of this stability, the charge two lump corresponding to two individual dyonic instantons being coincident is a stable object. It will not separate into two distinct charge one dyonic instantons under a perturbation.

The charge two dyonic instanton also admits the only closed geodesic that we have found. On this geodesic the instantons remain exactly coincident,
\begin{equation}
\omega = \frac{\rho}{\sqrt{2}}, \quad \dot \omega = \ddot \omega = 0,
\end{equation}
and have no oscillations in their size,
\begin{equation}
\dot \rho = \ddot \rho = 0.
\end{equation}
The equations of motion are then satisfied by
\begin{equation}
\dot \chi = - 4v, \quad \dot \theta = v.
\end{equation}
This corresponds to a spatially rotating charge two instanton with the rotational velocity $\dot\chi$ fixed by the scale of the potential. 

It would be interesting to investigate whether such a closed geodesic is stable in the full field theory or whether there are higher order radiative corrections that would cause it to decay. We leave this for future consideration.

\section{Conclusions and outlook}

In this paper we have calculated the full metric and potential on the moduli space of two dyonic instantons in terms of the parameters in the ADHM construction. With this construction in mind we have been able to understand some of the structure of the moduli space as arising from the quotient of the moduli space by symmetries of the ADHM data. We have also explored the dynamics of two slow moving (dyonic) instantons using the moduli space approximation. We have seen that instantons readily undergo right angled scattering like many other soliton systems. This too can be understood from symmetries of the underlying ADHM data. The presence of a potential has a significant effect on the motion of dyonic instantons and we have seen that they behave in a way which is qualitatively similar to Q-lumps \cite{Leese1991283}.

Several questions remain open for future research. We have only explored the dynamics when the (dyonic) instantons lie in a plane with their relative gauge alignment in the unbroken $\U(1)$ symmetry. When the gauge alignment in the full $\SU(2)$ symmetry is orthogonal to the unbroken $\U(1)$, the final term in the potential vanishes. It would be interesting to explore the effects of this on the dynamics of collisions between two dyonic instantons. Unfortunately the complexity of the full moduli space makes a systematic exploration of this regime numerically expensive.

In our discussion of the dynamics we have assumed that the moduli space approximation is a suitable approximation in the regimes we have considered and we have discounted any radiative modes as negligible. Certainly this is the case in similar systems \cite{Leese1991283} and we expect it to hold here as well, but ideally we could check the validity of the approximation with a comparison to the full field theory. Unfortunately a full simulation of the four dimensional field theory is beyond the reach of available computing power at this time. It may be possible to revisit this question in the future.

So far we have only considered the classical behaviour of this system in the context of the world volume theory of $2$ D$4$-branes. It would be interesting to explore how these results relate to M$5$-brane theory where the instanton moduli space must be recovered as part of the compatification limit. For instance, the dyonic instanton moduli space can be explicitly recovered in the light-cone quantisation of the equations of motion of a non-abelian system with $(2,0)$ supersymmetry \cite{Lambert:2010wm, Lambert:2011gb}. This system is not a full description of the M$5$-brane system but still captures some of the relavant properties.

To fully understand the connection between $5$ dimensional super-Yang-Mills and the world volume theory of multiple M$5$-branes it will be necessary to understand how the theory behaves near to the singularities on the moduli space, particularly in the quantised theory. One possibility for regulating the singularities is via a non-commutative deformation of the instantons. This would place a minimum bound on the instantons' size \cite{Nekrasov:1998ss}. Work on calculating the moduli space of non-commutative instantons is in progress. Another possibility is that the wave functions are invariant under the symmetries which are responsible for the orbifold structure of the moduli space and therefore unaware of the singularities.

In the quantum theory it would be interesting to investigate the bound states of dyonic instantons. Previous studies in this direction (for example \cite{Lee:1999xb}) have considered the bound states of periodic instantons and the behaviour in the decompactification limit. With the full metric and potential for two instantons now available it should be possible to consider this decompactified limit for two instantons directly. This may give some insights into the unusual $N^3$ scaling of the degrees of freedom expected for multiple M$5$-branes.

Finally, this work could be extended to calculate the moduli space metric and potential of dyonic instantons in $\SU(3)$ or $\SU(N)$. The higher gauge group allows the possibility of bound states that pass through another D-brane and may provide a more direct description of the index counting in \cite{Kim:2011mv}. Work on this is also in progress.

\section*{Acknowledgements}

We would like to thank Paul Sutcliffe, Kyung Kiu Kim, Seok Kim and Rafael Maldonado for useful discussions. We are grateful to the Center for Quantum Spacetime at Sogang University for their hospitality while part of this work was completed. DJS would also like to thank the Isaac Newton Institute for Mathematical Sciences, as part of the programme on the Mathematics and Applications of Branes in String and M-theory, and the Korea Institute for Advanced Study, for hospitality. JPA is supported by an STFC studentship and DJS is supported in part by the STFC Consolidated Grant ST/J000426/1. We also acknowledge the EU FP7-People-2009-IRSES grant ``Integrability, Symmetry And Quantum Spacetime'' (ISAQS).

\appendix

\section{Calculations}

\subsection{The scalar field} \label{ap:calculation of phi}

The scalar field can be calculated from the ADHM data without having to solve its equation of motion directly. We make use of the ansatz and method presented in \cite{Dorey:1996hu}. We start with an ansatz for $\phi$,
\begin{equation} \label{eq:phi ansatz in appendix}
\phi = i U^\dagger \mathcal{A} U, \quad \mathcal{A} = \begin{pmatrix}
q & 0 \\
0 & P
\end{pmatrix},
\end{equation}
where $U$ is the null vector from the ADHM construction, $q$ is a pure imaginary quaternion and $P$ is a $k \times k$ real and anti-symmetric matrix. The scalar VEV is given by $iq \in \mathfrak{su}(2)$.

The matrix $P$ is to be determined by solving the equation of motion $D^2 \phi = 0$. A straight-forward but lengthy calculation gives 
\begin{equation}
D^2 \phi = - 4i U^\dagger \{ bfb^\dagger, \mathcal{A} \} U 
           + 4i U^\dagger b f \Tr_2( \Delta^\dagger \mathcal{A} \Delta) f b^\dagger U.
\end{equation}
The trace in the second term is only over the quaternionic blocks and therefore picks out twice the real part of its argument. Using the block diagonal form of $\mathcal{A}$ the first term is $-4 i U^\dagger \{f, P\} U$. The lower $k \times k$ block of the ADHM data is $\Omega' = \Omega - \id_2 x$ and so we can rewrite the trace in the second term as
\begin{align}
\Tr_2( \Delta^\dagger \mathcal{A} \Delta) &= \Tr_2 (\Lambda^\dagger q \Lambda) + \Tr_2 (\Omega'^\dagger P \Omega') \\
&= \Tr_2 (\Lambda^\dagger q \Lambda) 
   + \tfrac{1}{2} \Tr_2 ( [\Omega'^\dagger, P] \Omega' - \Omega'^\dagger [\Omega', P] + \{P, \Omega'^\dagger \Omega\} ) \\
&= \Tr_2 (\Lambda^\dagger q \Lambda) 
   + \tfrac{1}{2} \Tr_2 ( [\Omega'^\dagger, P] \Omega' - \Omega'^\dagger [\Omega', P] +\{P, f^{-1}\} - \{P, \Lambda^\dagger \Lambda \} ),
\end{align}
where we have used $\Delta^\dagger \Delta = \Lambda^\dagger \Lambda + \Omega'^\dagger \Omega' = f^{-1}$. In the commutator terms all $x$ dependence is proportional to $\id_k$ and vanishes. Thus
\begin{align}
\tfrac{1}{2} \Tr_2 ( [\Omega'^\dagger, P] \Omega' - \Omega'^\dagger [\Omega', P])
&= \tfrac{1}{2} \Tr_2 (\bar e_m e_n) ( [\Omega_m, P] \Omega_n - \Omega_m [\Omega_n, P]) \\
&= - [\Omega_m, [\Omega_m, P]].
\end{align}
Combining all of this we have
\begin{align}
\begin{split}
D^2\phi &= -4i \Big( U^\dagger \{ f,P - \tfrac{1}{2} \Tr_2(P)\} U \\
&\hspace{0.6in} + U^\dagger b f \left(\Tr_2 (\Lambda^\dagger q \Lambda) - [\Omega_m, [\Omega_m, P]] - \{P, \Lambda_m^\T \Lambda_m \} \right) f b^\dagger U \Big).
\end{split}
\end{align}
Since $P$ is real, the quantity $(P - \tfrac{1}{2} \Tr_2(P))$ is zero. Finally, we can rewrite
\begin{equation}
\Tr_2 (\Lambda^\dagger q \Lambda) = q^a \Lambda^\T_m \Lambda_n \Tr_2(\bar e_m e_a e_n) = 2 \eta^a_{mn} q^a \Lambda^\T_m \Lambda_n.
\end{equation}
So $P$ must satisfy
\begin{equation}
2 \eta^a_{mn} q^a \Lambda^\T_m \Lambda_n - [\Omega_m, [\Omega_m, P]] - \{P, \Lambda^\T_m \Lambda_m \} = 0.
\end{equation}
We can see from the symmetry properties of the other quantities involved that $P$ must be antisymmetric as expected.
Note that the indices in this expression are for the quaternion components, not the matrix components.

For instanton number $k=2$, the ADHM data is given in equation \eqref{eq:ADHM data}. The constituent parts are
\begin{equation}
\Lambda = \begin{pmatrix} v_1 & v_2 \end{pmatrix}, 
\quad \text{and} \quad
\Omega = \begin{pmatrix}
\tau & \sigma \\
\sigma & -\tau
\end{pmatrix}.
\end{equation} 
The first term in our constraint on $P$ is therefore
\begin{align}
2 \eta^a_{ij} q^a \Lambda^\T_i \Lambda_j &= \Tr_2 \del{ \Lambda^\dagger q \Lambda }, \\
&= \frac{1}{2} \begin{pmatrix}
0 & \Tr\del{ \bar v_1 q v_2 - \bar v_2 q v_1 } \\
\Tr\del{ \bar v_2 q v_1 - \bar v_1 q v_2} & 0
\end{pmatrix},
\end{align}
where $\bar v_2 q v_1 - \bar v_1 q v_2$ is real. If we write $P$ as
\begin{equation}
P = \begin{pmatrix}
0 & p \\
-p & 0
\end{pmatrix},
\end{equation}
then the second and third terms in the constraint are
\begin{equation}
[\Omega_m, [\Omega_m, P]] = 4 p \begin{pmatrix}
0 & |\tau|^2 + |\sigma|^2 \\
-(|\tau|^2 + |\sigma|^2) & 0
\end{pmatrix},
\end{equation}
and
\begin{equation}
\{P, \Lambda^\T_m \Lambda_m \} = p \begin{pmatrix}
0 & |v_1|^2 + |v_2|^2 \\
- ( |v_1|^2 + |v_2|^2 ) & 0
\end{pmatrix}.
\end{equation}
The off diagonal entry in $P$ is therefore given by
\begin{equation} \label{eq:solution for p}
p = \frac{1}{2 N_A} \Tr \del{ \bar v_1 q v_2 - \bar v_2 q v_1 },
\end{equation}
where
\begin{equation}
N_A = |v_1|^2 + |v_2|^2 + 4 \del{ |\tau|^2 + |\sigma|^2 }.
\end{equation}

We will not present the expanded expression for $\phi$ as it is quite complicated and doesn't provide any additional insight. However, the implicit form in equation \eqref{eq:phi ansatz in appendix} allows up to calculate the potential arising from $\phi$ and to easily evaluate and plot $\phi$.

\subsection{The moduli space metric} \label{ap:calculation of metric}

In this section we will present the calculation of the metric on the moduli space of two instantons.
We will first review the method of Osborn used to calculate the metric determinant \cite{Osborn:1981yf}. This was used by Peeters and Zamaklar to calculate the moduli space metric \cite{Peeters:2001np} of two instantons but only to order $|\tau|^{-2}$. We have extended this calculation to the full metric.

Recall that the metric on the moduli space of instantons is defined by
\begin{equation}
g_{rs}= \int \dif{}^4x \, \Tr \left( \delta_r A_i \delta_s A_i \right),
\end{equation}
where $r,s = 1, \ldots, 8k$ correspond to the coordinates on the moduli space and
\begin{equation}
\delta_r A_i = \partial_r A_i - D_i \epsilon_r,
\end{equation}
are the zero-modes of $A_i$ corresponding to variations along these coordinates. The zero-modes have their gauge transformation component removed by $D_i \epsilon_r$ and so are orthogonal to gauge transformations,
\begin{equation}
D_i(\delta_r A_i) = 0.
\end{equation}

In principle we could find the metric by finding an explicit expression for $A_i(\mathbf{z}; \mathbf{x})$, solving the gauge fixing condition for $\epsilon_r$ and taking the trace of each pair of zero-modes. In practice this approach is intractable. Fortunately we can use the ADHM construction to reduce this to an algebraic calculation which can be readily done for two $\SU(2)$ instantons.

Recall that if the ADHM data is given by $\Delta(x) = a - bx$ then the gauge field is constructed as
\begin{equation}
A_i = i U^\dagger \partial_i U, \quad \text{where} \quad \Delta^\dagger U = 0, \quad U^\dagger U = \id.
\end{equation}
Let us work at a single point on the moduli space, $\mathbf{z}_0$. The derivative of $A_i$ in one of the coordinate directions on the moduli space (parameters in the ADHM data) can be calculated in terms of the ADHM data,
\begin{equation} \label{eq:variation of A wrt Delta}
\partial_r A_i \big|_{z = z_0} = - i U^\dagger \partial_r \Delta f \bar e_i b^\dagger U + i U^\dagger b e_i f \partial_r \Delta^\dagger U + D_i(i U^\dagger \partial_r U).
\end{equation}
The last term is an explicit gauge transformation but the first two terms are not necessarily orthogonal to gauge transformations and may also contain an implicit gauge transformation. However, we have freedom in rewrite $\partial_r A_i$ with different parts of the gauge transformation made explicit. Recall that we can perform a transformation of the ADHM data of the form
\begin{equation}
\Delta \rightarrow Q \Delta R, \quad U \rightarrow Q U.
\end{equation}
Consider a transformation of this form in a region of $\mathbf{z}_0$ such that $Q(\mathbf{z}_0) = \id$ and $R(\mathbf{z}_0) = \id$. This leaves $A_i$ invariant and allows us to write
\begin{equation}
\partial_r A_i \big|_{z = z_0} = - i U^\dagger C_r f \bar e_i b^\dagger U + i U^\dagger b e_i f C_r^\dagger U + D_i(i U^\dagger \partial_r(Q^\dagger U)),
\end{equation}
where
\begin{equation}
C_r = \partial_r \Delta + \partial_r Q \Delta + \Delta \partial_r R.
\end{equation}
We can use this freedom to choose an appropriate $\partial_r Q$ and $\partial_r R$ at $\mathbf{z}_0$ such that the only piece of $\partial_r A_i$ which is parallel to a gauge transformation is in the explicit last term. The first two terms will then be a zero-mode. The conditions that $C_r$ must satisfy for this to be the case are expressed in the following claim.

\begin{claim}
The expression 
\begin{equation}
\delta_r A_i = - i U^\dagger C_r f \bar e_i b^\dagger U + i U^\dagger b e_i f C_r^\dagger U,
\end{equation}
is a zero-mode if $C_r$ is independent of $x$ and
\begin{equation}
\Delta^\dagger C_r = (\Delta^\dagger C_r)^\T.
\end{equation}
Equivalently, if
\begin{equation}
a^\dagger C_r = (a^\dagger C_r)^\T, \quad \text{and} \quad b^\dagger C_r = (b^\dagger C_r)^\T.
\end{equation}
\end{claim}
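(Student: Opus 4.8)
The plan is to verify the two defining properties of a zero mode in turn: that the proposed $\delta_r A_i$ solves the linearised self-dual equation \eqref{eq:linear self-dual field equation}, and that it satisfies the gauge-fixing condition $D_i(\delta_r A_i)=0$. The equivalence of the two forms of the hypothesis is immediate and I would dispose of it first: since $C_r$ is independent of $x$ and $\Delta = a - bx$ with $x$ a scalar quaternion, $\Delta^\dagger C_r = a^\dagger C_r - \bar x\, b^\dagger C_r$, so the antisymmetric part $\Delta^\dagger C_r - (\Delta^\dagger C_r)^\T$ equals $\big(a^\dagger C_r - (a^\dagger C_r)^\T\big) - \bar x\big(b^\dagger C_r - (b^\dagger C_r)^\T\big)$. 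This is an affine function of $x\in\R^4$, hence vanishes identically precisely when both brackets vanish. From now on I take the hypothesis in the form $\Delta^\dagger C_r = (\Delta^\dagger C_r)^\T$, understood to hold for all $x$.

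For the linearised self-dual equation I would argue by inheritance rather than by direct substitution. Regard $C_r$ as an infinitesimal deformation $\Delta\to\Delta+\varepsilon C_r$. Then $\delta(\Delta^\dagger\Delta) = \varepsilon\big(\Delta^\dagger C_r + (\Delta^\dagger C_r)^\dagger\big)$, and when $\Delta^\dagger C_r$ is symmetric the $(j,k)$ entry of this matrix is $2\varepsilon\,\Re\big((\Delta^\dagger C_r)_{jk}\big)$, i.e. it is a real symmetric $k\times k$ matrix tensored with $\id_2$. Hence the ADHM constraint \eqref{eq:ADHM constraint} is preserved to first order, so $C_r$ is tangent to the constraint surface and one may take a curve through $z_0$ with $\partial_r\Delta = C_r$ and trivial $Q,R$, generating to first order a genuine family of self-dual gauge fields. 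Applying \eqref{eq:variation of A wrt Delta} to this curve, the derivative of the gauge field is exactly $\delta_r A_i$ plus the explicit gauge transformation $D_i(i U^\dagger\partial_r U)$. Since that derivative solves \eqref{eq:linear self-dual field equation} (being tangent to a curve of self-dual configurations) and gauge transformations automatically solve it, so does their difference $\delta_r A_i$. Along the way one checks that $i U^\dagger\big(-C_r f\bar e_i b^\dagger + b e_i f C_r^\dagger\big)U$ is Hermitian, using $e_i^\dagger = \bar e_i$ and $f = f^\dagger$, so that $\delta_r A_i$ is a legitimate variation of the gauge field.

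The substance of the claim is the gauge condition $D_i(\delta_r A_i)=0$, and here I would compute directly, following Osborn \cite{Osborn:1981yf}. Writing $\delta_r A_i = i U^\dagger\mathcal{M}_i U$ with $\mathcal{M}_i = -C_r f\bar e_i b^\dagger + b e_i f C_r^\dagger$, one expands $D_i(i U^\dagger\mathcal{M}_i U) = \partial_i(i U^\dagger\mathcal{M}_i U) - i\big[A_i,\, i U^\dagger\mathcal{M}_i U\big]$ and repeatedly applies the standard ADHM identities: the completeness relation $U U^\dagger + \Delta f\Delta^\dagger = \id$, the relation $\Delta^\dagger\partial_i U = \bar e_i b^\dagger U$ obtained by differentiating $\Delta^\dagger U = 0$, the constraint $\Delta^\dagger\Delta = \id_2\otimes f^{-1}$ together with $\partial_i f = -f\big(\partial_i\Delta^\dagger\,\Delta + \Delta^\dagger\partial_i\Delta\big)f$, and the quaternionic contraction $\sum_i e_i M\bar e_i = 4\,\Re(M)\,\id_2$. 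After the cancellations the surviving pieces are sandwiched as $U^\dagger b\,[\,\cdots\,]\,b^\dagger U$ and are proportional to the antisymmetric parts $\Delta^\dagger C_r - (\Delta^\dagger C_r)^\T$ and $b^\dagger C_r - (b^\dagger C_r)^\T$ — the latter appearing because $b^\dagger$ is the coefficient of $x$ in $\Delta^\dagger$, which is exactly why the second half of the hypothesis is needed. Both antisymmetric parts vanish by assumption, giving $D_i(\delta_r A_i)=0$, so $\delta_r A_i$ is a zero mode.

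The main obstacle is the bookkeeping in this last step: one must track which index contractions project onto the symmetric versus the antisymmetric part of the $k\times k$ matrix $\Delta^\dagger C_r$, commute the real symmetric matrix $f$ past the (anti-)self-dual combinations $\bar e_i e_j$ and $e_i\bar e_j$, and — the point that pins down the precise form of the hypothesis — remember that the transpose $(\cdot)^\T$ here permutes only the $k\times k$ matrix indices and does not reverse the order of the quaternion factors inside an entry, so that the relevant reality condition is symmetry of $\Delta^\dagger C_r$ rather than Hermiticity. The computation is essentially Osborn's, streamlined because we need only the vanishing of $D_i\delta_r A_i$ and not the evaluation of the inner products $g_{rs}$.
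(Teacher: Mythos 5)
Your strategy is correct, and half of it is genuinely different from the paper's. The paper disposes of both zero-mode conditions in one computation: it first shows that $a_j \equiv U^\dagger b f e_j$ is itself of zero-mode type (it satisfies the linearised self-dual equation and $D_i a_i = 0$), and then writes $D_i(\delta_r A_j)$ as $-\,i\,U^\dagger b f\, K_{ij}\, f b^\dagger U$ with $K_{ij} = e_i \Delta^\dagger C_r \bar e_j - e_j C_r^\dagger \Delta \bar e_i$, plus terms built from $D_i a_j$ and its conjugate which are already of the right type; the whole claim then reduces to the algebraic fact that $K_{ij}$ has self-dual antisymmetric part and vanishing trace (using $\textstyle\sum_i e_i M \bar e_i = 4 \Re(M)$ and $\Re(C_r^\dagger \Delta) = \Re(\Delta^\dagger C_r)^\T$), both consequences of $\Delta^\dagger C_r = (\Delta^\dagger C_r)^\T$. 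Your inheritance argument for \eqref{eq:linear self-dual field equation} — symmetry of $\Delta^\dagger C_r$ preserves the ADHM constraint to first order, so by \eqref{eq:variation of A wrt Delta} the proposed $\delta_r A_i$ differs from the derivative of a family of self-dual fields by a gauge transformation — is a nice conceptual alternative that avoids quaternion algebra for that half; its one soft spot is the tacit assumption that a solution of the linearised constraint integrates to an actual curve of ADHM data (smoothness of the constraint set at $z_0$), which should either be remarked upon or replaced by a first-order verification. The substantive caveat is the other half: the gauge condition $D_i(\delta_r A_i) = 0$, which you rightly call the substance of the claim, is asserted as the outcome of an Osborn-style cancellation rather than derived, and the expected leftover is not most naturally two separate antisymmetric pieces in $\Delta^\dagger C_r$ and $b^\dagger C_r$ — with the paper's $a_j$ decomposition it collapses to the single sandwiched term $e_i \Delta^\dagger C_r \bar e_i - e_i C_r^\dagger \Delta \bar e_i = 4\big(\Re(\Delta^\dagger C_r) - \Re(\Delta^\dagger C_r)^\T\big)$, the $b^\dagger C_r$ condition entering only through the $x$-decomposition you already gave in your first paragraph. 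Importing that decomposition would turn your sketch into a short complete proof, and since the same computation also delivers the self-duality of $K_{[ij]}$, it would let you drop the curve-integration step altogether.
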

\begin{proof}
Consider the expression
\begin{equation}
a_i \equiv U^\dagger b f e_i,
\end{equation}
which makes up part of $\delta_r A_i$.
Treating this as a vector in the fundamental representation we can work out its covariant derivative,
\begin{align}
D_i a_j &\equiv \partial_i a_j - i A_i a_j \\
&= U^\dagger e_i b f \Delta^\dagger b f e_j + U^\dagger b f (\bar e_i b^\dagger \Delta + \Delta^\dagger b e_i) f e_j.
\end{align}
If we write $\Delta^\dagger b = c_k \bar e_k$ with the quaternion components made explicit and $c_k$ some real valued matrices then we can write this covariant derivative as
\begin{align}
D_i a_j &= U^\dagger b f c_k f \left( e_i \bar e_k e_j + \bar e_i e_k e_j + \bar e_k e_i e_j \right) \\
&= - U^\dagger b f c_k f \left( e_i \bar e_j e_k - 2 \delta_{jk} e_i - 2 \delta_{ik} e_j\right),
\end{align}
where we have used the quaternion identity,
\begin{equation}
\bar e_i e_j = - \bar e_j e_i + 2 \delta_{ij}.
\end{equation}
In this form is it easy to see that $a_i$ satisfies the linear self-dual field equation and background gauge condition,
\begin{equation}
D_{[i} a_{j]} = \tfrac{1}{2} \varepsilon_{ijkl} D_k a_l, \quad \text{and} \quad D_i a_i = 0.
\end{equation}
The covariant derivative of $\delta_r A_i$ can be written as
\begin{align}
D_i(\delta_r A_j) &= - i D_i U^\dagger C_r a_j^\dagger + i a_j C_r^\dagger D_i U - i U^\dagger C_r (D_i a_j)^\dagger + i D_i a_j C_r^\dagger U \\
&= - i U^\dagger b f \left(e_i \Delta^\dagger C_r \bar e_j - e_j C_r^\dagger \Delta \bar e_i \right) f b^\dagger U - i U^\dagger C_r D_i a_j^\dagger + i D_i a_j C_r^\dagger U.
\end{align}
Here $U^\dagger$ is also treated as a vector in the fundamental representation and its covariant derivative is
\begin{align}
D_i U^\dagger &\equiv \partial_i U^\dagger - i A_i U^\dagger \\
&= U^\dagger e_i b f \Delta^\dagger.
\end{align}
We have already shown that the last two terms in $D_i(\delta_r A_j)$ satisfy the conditions of a zero-mode so we only need to consider the first. Thus $\delta_r A_i$ will be a zero-mode if
\begin{equation}
K_{[ij]} = \tfrac{1}{2} \varepsilon_{ijkl} K_{kl}, \quad K_{ii} = 0
\end{equation}
where
\begin{equation}
K_{ij} = e_i \Delta^\dagger C_r \bar e_j - e_j C_r^\dagger \Delta \bar e_i.
\end{equation}
This is true if
\begin{equation}
\Delta^\dagger C_r = (\Delta^\dagger C_r)^\T.
\end{equation}
Since $\Delta = a - bx$ is linear in $x$ and $C_r$ has no dependence on $x$, we can split this into two conditions,
\begin{equation}
a^\dagger C_r = (a^\dagger C_r)^\T, \quad \text{and} \quad b^\dagger C_r = (b^\dagger C_r)^\T. \qedhere
\end{equation}
\end{proof}

It now remains to establish the conditions on $\partial_r Q$ and $\partial_r R$ so that the conditions on $C_r$ are satisfied and $\delta_r A_i$ is a zero-mode. In our canonical choice for the ADHM data $b$ is given by
\begin{equation}
b = \begin{pmatrix}
0 & 0 \\
1 & 0 \\
0 & 1
\end{pmatrix},
\end{equation}
and the transformation parameter $Q$ takes the form
\begin{equation}
Q = \begin{pmatrix}
q & 0  \\
0 & R^{-1}
\end{pmatrix}.
\end{equation}
We can set $q = 1$ so that it doesn't contribute to the variation of $Q$, which can now be expressed entirely in terms of the variation of $R$,
\begin{equation}
\partial_r Q = - b \, \partial_r R \, b^\dagger.
\end{equation}
The linear coefficient of $x$ in $C_r$ is therefore zero,
\begin{equation}
\partial_r b + \partial_r Q \, b + b \, \partial_r R = 0,
\end{equation}
and $C_r$ is indeed independent of $x$,
\begin{equation}
C_r = \partial_r a + \partial_r Q \, a + a \, \partial_r R.
\end{equation}

It is straightforward to see that $C_r$ satisfies the first condition to be a zero-mode, $a^\dagger C_r = (a^\dagger C_r)^\T$, since $R$ is an orthogonal matrix and $\partial_r R^\T = - \partial_r R$. For the second condition, $b^\dagger C_r = (b^\dagger C_r)^\T$, we require
\begin{equation} \label{eq:condition on delta R}
a^\dagger \partial_r a - (a^\dagger \partial_r a)^\T - a^\dagger b \, \partial_r R \, b^\dagger a - b^\dagger a \, \partial_r R \, a^\dagger b + \mu^{-1} \partial_r R + \partial_r R \, \mu^{-1} = 0,
\end{equation}
where
\begin{equation}
a^\dagger a = \mu^{-1},
\end{equation}
is real and invertible. To find the zero-mode in the $r$ direction we need to solve this constraint for $\partial_r R$ at each point of the moduli space in terms of parameters appearing in $a$. This is now a purely algebraic constraint on zero-modes.

To find the inner product between two zero-modes we can use the identity of Osborn \cite{Osborn:1981yf},
\begin{equation}
\Tr \left ( \delta_r A_i \delta_s A_i \right) = - \tfrac{1}{2} \partial^2 \Tr \left( C_r^\dagger P C_s f + f C_r^\dagger C_s \right).
\end{equation}
The metric is then given by using Stokes' theorem to integrate over the boundary,
\begin{align}
g_{rs} &= 2 \pi^2 \Tr \left( C_r^\dagger P_\infty C_s + C_r^\dagger C_s \right) \\
&= 2\pi^2 \Tr \left( \partial_r a^\dagger (1 + P_\infty) \partial_s a - \left( a^\dagger \partial_r a - (a^\dagger \partial_r a)^\T \right) \partial_s R \right), \label{eq:metric in terms of general ADHM data}
\end{align}
where
\begin{equation}
P = \id - \Delta^\dagger f \Delta
\end{equation}
and $P_\infty = \lim_{|x| \rightarrow \infty} P$.

Having outlined the general method, let us now turn our attention to the metric for two instantons in $\SU(2)$ Yang-Mills. As in Section \ref{sec:moduli space of two dyonic instantons}, the ADHM data for two instantons is
\begin{equation}
\Delta(x) = \begin{pmatrix}
v_1 & v_2 \\
\tilde \rho + \tau & \sigma \\
\sigma & \tilde \rho - \tau
\end{pmatrix}
- x \begin{pmatrix}
0 & 0 \\
1 & 0 \\
0 & 1
\end{pmatrix}.
\end{equation}
The projector at infinity is
\begin{equation}
P_\infty = \lim_{x \rightarrow \infty} P = \id - b^\dagger b = \begin{pmatrix}
1 & 0 & 0 \\
0 & 0 & 0 \\
0 & 0 & 0
\end{pmatrix}.
\end{equation}
The first part of the metric expression in equation \eqref{eq:metric in terms of general ADHM data} is therefore
\begin{align}
\dif s^2_1 &= \Tr \left( \dif a^\dagger (1 + P_\infty) \dif a \right) \\
&=2 \Tr (\dif\bar\tilde\rho \dif \tilde\rho + \dif \bar v_1 \dif v_1 + \dif \bar v_2 \dif v_2 + \dif\bar\tau \dif\tau + \dif\bar\sigma \dif\sigma ).
\end{align}
The $\tilde \rho$ directions are flat and we will neglect them from now on. The first four terms are all fundamental parameters, but the last term involving $\sigma$ needs to be expanded according to equation \eqref{eq:sigma definition},
\begin{equation}
\begin{split}
\MoveEqLeft 2\Tr(\dif \bar \sigma \dif \sigma) \\
&=            \Tr \Bigg( \frac{1}{8 | \tau |^2} \dif \bar \Lambda \dif \Lambda 
             + \frac{1}{4 | \tau |^4} \bar\Lambda \dif\bar\tau \tau \dif\Lambda
            - \frac{1}{4 | \tau |^4} \dif\bar\Lambda \Lambda \dif |\tau|^2
             + \frac{1}{8 | \tau |^4} |\Lambda|^2 \dif\bar\tau \dif\tau \\
&\qquad - \frac{1}{4 | \tau |^6} |\Lambda|^2 \dif\bar\tau \tau \dif |\tau|^2
             + \frac{1}{8 | \tau |^6} |\Lambda|^2 \dif |\tau|^2 \dif |\tau|^2 \Bigg).
\end{split}
\end{equation}
We note that
\begin{equation}
\Tr(\dif|\tau|^2) = \Tr(\dif \bar \tau \tau + \bar \tau \dif \tau) = 2 \Tr(\dif \bar \tau \tau),
\end{equation}
so that the terms at order $|\tau|^{-6}$ all vanish.
The terms at order $|\tau|^{-2}$ have been calculated previously \cite{Peeters:2001np} and are
\begin{equation}
\begin{split}
\MoveEqLeft \frac{1}{8|\tau|^2} \Tr( \dif \bar\Lambda \dif\Lambda ) \\
&=\frac{1}{|\tau|^2} \Big( 
              |v_1|^2 (\dif v_2 \cdot \dif v_2 ) + |v_2|^2 (\dif v_1 \cdot \dif v_1 )
              + 2(v_1 \cdot \dif v_1) (v_2 \cdot \dif v_2) \\
&\qquad - (\dif v_2 \cdot v_1) ( \dif v_2 \cdot v_1 ) - (\dif v_1 \cdot v_2) ( \dif v_1 \cdot v_2 )
              - 2 (v_1 \cdot v_2) (\dif v_1 \cdot \dif v_2) \\
&\qquad + 2 \varepsilon_{ijkl} v_1^k v_2^l \dif v_1^m \dif v_2^n \Big),
\end{split}
\end{equation}
where
\begin{equation}
p \cdot q = p_a q^a
\end{equation}
is the scalar product of quaternions treated as four vectors.
The terms at order $|\tau|^{-4}$ are
\begin{equation}
\frac{1}{8 |\tau|^4} \Tr( |\Lambda|^2 \dif\bar\tau \dif\tau  ) = 
  \frac{1}{|\tau|^4} \left( |v_1|^2 |v_2|^2  - (v_1 \cdot v_2)^2 \right) (\dif\tau \cdot \dif \tau),
\end{equation}
and
\begin{align}
\MoveEqLeft \frac{1}{4 |\tau|^4} \Tr( \bar \Lambda \dif \bar \tau \tau \dif \Lambda - \dif \bar \Lambda \Lambda \dif\,(\bar \tau \tau) )  \\
&= - \frac{1}{4 |\tau|^4} \Tr( \dif \Lambda \bar \Lambda  \bar \tau \dif\tau )  \\
&= - \frac{1}{4 |\tau|^4} ( \Tr( \Re(\dif \Lambda \bar \Lambda) \Re(\bar \tau \dif\tau) ) - \Tr( \Im(\Lambda \dif\bar \Lambda) \Im(\bar \tau \dif\tau) ) \\
\begin{split}
&=  - \frac{2}{|\tau|^4} \left( \tau \cdot \dif \tau \right) \Big( 
     |v_1|^2 (v_2 \cdot \dif v_2) + |v_2|^2 (v_1 \cdot \dif v_1) \\
& \qquad \qquad \qquad \qquad - (v_1 \cdot v_2) (v_1 \cdot \dif v_2) - (v_1 \cdot v_2) (v_2 \cdot \dif v_1)
    \Big) \\
&\qquad + \frac{1}{2 |\tau|^4} \left( \varepsilon_{ijkl} \Lambda_i \dif \Lambda_j \tau_k \dif \tau_l + (\Lambda \cdot \dif \tau)(\tau \cdot \dif \Lambda) - (\Lambda \cdot \tau) (\dif \Lambda \cdot \dif \tau) \right).
\end{split}
\end{align}
In this last line we have used
\begin{align}
\Tr( \Im(p \bar q) \Im(r \bar s) ) &= \Tr( \bar \eta^a_{ij} \bar \eta^b_{kl} e_a e_b ) p_i q_j r_k s_l \\
&= 2 ( \varepsilon_{ijkl} - \delta_{ik} \delta_{jl} + \delta_{il} \delta_{jk} ) p_i q_j r_k s_l.
\end{align}

For the second part of the metric, recall that $R$ is an $\O(2)$ transformation with one parameter, $\theta$. Since we require a continuous transformation it must be a rotation and its variation is an anti-symmetric matrix,
\begin{equation}
\dif R = - \dif \theta \begin{pmatrix}
0 & 1 \\
-1 & 0
\end{pmatrix}.
\end{equation}
Let us define a shorthand quantity, $\dif k$, by
\begin{equation} \label{eq:definition of k}
a^\dagger \dif a - (a^\dagger \dif a)^\T \equiv \dif k \begin{pmatrix}
0 & 1 \\
-1 & 0
\end{pmatrix},
\end{equation}
where the matrix form is determined by the left hand side being real and anti-symmetric. The constraint placed on $\dif R$ by equation \eqref{eq:condition on delta R} becomes
\begin{equation}
\dif \theta = \frac{\dif k}{N_A},
\end{equation}
where
\begin{equation}
N_A = |v_1|^2 + |v_2|^2 + 4 \left( |\tau|^2 + |\sigma|^2 \right).
\end{equation}
The second part of the metric is therefore
\begin{align}
\dif s^2_2 &= - \Tr\left( \left( a^\dagger \dif a -(a^\dagger \dif a)^\T \right) \dif R \right) \\
&= - 4 \frac{\dif k^2}{N_A}.
\end{align}
Calculating $\dif k$ explicitly from equation \eqref{eq:definition of k}, we find
\begin{equation}
\dif k = \bar v_1 \dif v_2 - \bar v_2 \dif v_1 + 2 (\bar \tau \dif \sigma - \bar \sigma \dif \tau).
\end{equation}
This is necessarily real and can be checked from the definition of $\sigma$. This expression can be expanded as
\begin{align}
\dif k &= \tfrac{1}{2} \Tr ( \bar v_1 \dif v_2 - \bar v_2 \dif v_1 + 2 (\bar \tau \dif \sigma - \bar \sigma \dif \tau) ) \\
\begin{split}
&= \left( v_1 \cdot \dif v_2 \right) - \left( v_2 \cdot \dif v_1 \right) \\
&\qquad - \frac{2}{|\tau|^2} \left( \varepsilon_{mnpq} v_2^m v_1^n \tau^p \dif\tau^q + (v_2 \cdot \tau) (v_1 \cdot \dif\tau) - (v_1 \cdot \tau) (v_2 \cdot \dif\tau) \right)
\end{split}
\end{align}

Putting all of this together, the metric on the moduli space of two $SU(2)$ instantons is therefore,
\begin{equation}
\begin{split}
\frac{\dif s^2}{8 \pi^2}
&= \dif v_1^2 + \dif v_2^2 + \dif \tau^2 \\
&\qquad + \frac{1}{4|\tau|^2} \Big(|v_1|^2 \dif v_2^2 + |v_2|^2 \dif v_1^2 + 
    2 (v_1 \cdot \dif v_1) (v_2 \cdot \dif v_2) - (v_1 \cdot \dif v_2)^2 \\
&\hspace{0.6in} - (v_2 \cdot \dif v_1)^2 - 2 (v_1 \cdot v_2) (\dif v_1 \cdot \dif v_2) + 2 \varepsilon_{ijkl} v_1^i v_2^j \dif v_1^k \dif v_2^l \Big) \\
&\qquad + \frac{1}{4|\tau|^4} \left(|v_1|^2 |v_2|^2 - (v_1 \cdot v_2)^2 \right) \dif\tau^2 \\
&\qquad - \frac{1}{2|\tau|^4} \Big(|v_1|^2 (v_2 \cdot \dif v_2) + |v_2|^2 (v_1 \cdot \dif v_1)\\ 
&\hspace{0.6in} - (v_1 \cdot v_2)(v_1 \cdot \dif v_2) - (v_1 \cdot v_2)(v_2 \cdot \dif v_1) \Big) \tau \cdot \dif \tau \\
&\qquad +\frac{1}{8|\tau|^4} \Big( \varepsilon_{ijkl} \Lambda_i \dif \Lambda_j \tau_k \dif \tau_l +(\Lambda \cdot \dif \tau)(\tau \cdot \dif \Lambda) - (\Lambda \cdot \tau) (\dif \Lambda \cdot \dif \tau)  \Big) \tau \cdot \dif \tau \\
&\qquad - \frac{1}{N_A} \Big(v_1 \cdot \dif v_2 - v_2 \cdot \dif v_1 \\
&\hspace{0.6in} - \frac{2}{|\tau|^2} \big(
       \varepsilon_{mnpq} v_2^m v_1^n \tau^p \dif\tau^q + (v_2 \cdot \tau) (v_1 \cdot \dif\tau) - (v_1 \cdot \tau) (v_2 \cdot \dif\tau)
     \big)
    \Big)^2.
\end{split}
\end{equation}

\subsection{The moduli space potential} \label{ap:calculation of potential}

The potential on the moduli space can be calculated directly from the Yang-Mills potential,
\begin{equation}
V = \int \dif{}^4x \, \Tr \del{ D_i \phi D_i \phi },
\end{equation}
by finding $\phi$ and $D_i \phi$ in terms of the ADHM data and evaluating the integral. 
We have seen in Appendix \ref{ap:calculation of phi} that $\phi$ is given by
\begin{equation}
\phi = i U^\dagger \mathcal{A} U, \quad \mathcal{A} = \begin{pmatrix}
q & 0 & 0 \\
0 & 0 & p \\
0 & -p & 0
\end{pmatrix},
\end{equation}
where $iq$ is the VEV of $\phi$ and $p$ is
\begin{equation}
p = \frac{1}{2 N_A} \Tr \del{\bar v_1 q v_2 - \bar v_2 q v_1}.
\end{equation}
If we integrate the potential by parts and use the scalar equation of motion for $\phi$, the potential becomes
\begin{equation}
V = \lim_{R \rightarrow \infty} \int_{\mathclap{|x| = R}} \dif\mathrm{S}^3\, \hat x_i \Tr \del{ \phi D_i \phi }.
\end{equation}
In terms of the ADHM data, the covariant derivative of $\phi$ is
\begin{equation}
D_i \phi = i U^\dagger e_i b f \Delta^\dagger \mathcal{A} U + i U^\dagger \mathcal{A} \Delta f \bar e_i b^\dagger U.
\end{equation}
For two instantons with ADHM data as in equation \eqref{eq:ADHM data}, the components of $U$ must satisfy,
\begin{align}
\bar v_1 U_1 + (\bar \tau - \bar x) U_2 + \bar \sigma U_3 &= 0, \\
\bar v_2 U_1 + \bar \sigma U_2 - (\bar \tau + \bar x) U_3 &= 0,
\end{align}
which can be solved in the limit $|x| \rightarrow \infty$ by
\begin{align}
U_1 &\rightarrow 1 \\
U_2 &\rightarrow \frac{x}{|x|^2} \bar v_1, \\
U_3 &\rightarrow \frac{x}{|x|^2} \bar v_2.
\end{align}
Expanding the leading order terms in the potential gives us
\begin{equation}
\hat x_i D_i \phi = 2 \frac{i}{|x|^3} \left( v_2 p \bar v_1 - v_1 p \bar v_2 + q(|v_1|^2 + |v_2|^2) \right) + \mathcal{O}\left( |x|^{-4} \right),
\end{equation}
and the potential is
\begin{align}
V &= - 2 \lim_{R \rightarrow \infty} \int_{\mathclap{|x| = R}} \dif\mathrm{S}^3\, \frac{1}{|x|^3} \Tr \left( q (v_2 p \bar v_1 - v_1 p \bar v_2) + q^2 (|v_1|^2 + |v_2|^2) \right) + \mathcal{O}\left( |x|^{-4} \right) \\
&= - 4 \pi^2 \Tr \left( q^2 (|v_1|^2 + |v_2|^2) + (\bar v_1 q v_2 - \bar v_2 q v_1 ) p \right) \\
&= 8 \pi^2 |q|^2 \del{ |v_1|^2 + |v_2|^2 - \frac{1}{N_A}|\bar v_2 \hat q v_1 - \bar v_1 \hat q v_2|^2 }.
\end{align}

\bibliographystyle{unsrt}
\bibliography{bibliography}

\end{document}